\newtheorem{lemma}{Lemma}
\newtheorem{theorem}{Theorem}
\title{Iterative algorithms for total variation-like reconstructions in seismic tomography}
\author{Ignace Loris and Caroline Verhoeven\\Université Libre de Bruxelles, Brussels, Belgium.}
\begin{document}
\maketitle
\begin{abstract}
A qualitative comparison of total variation like penalties (total variation, Huber variant of total variation, total generalized variation, \ldots) is made in the context of global seismic tomography. Both penalized and constrained formulations of seismic recovery problems are treated. A number of simple iterative recovery algorithms applicable to these problems are described. The convergence speed of these algorithms is compared numerically in this setting. For the constrained formulation a new algorithm is proposed and its convergence is proven.
\end{abstract}

\section{Introduction}

The aim of this paper is two-fold. To give a qualitative description of various total variation-like regularization methods in the context of a global seismic inversion problem, and to compare several iterative algorithms numerically, that can be used to perform these inversions. A new algorithm is also included and a proof of convergence is given.

Inverse problems in seismic tomography are characterized by a combination of insufficient and noisy data. It follows that resulting tomographic reconstructions of wave-speed anomalies are non-unique and corrupted by noise; some kind of regularization of the inverse problem is required.
The dynamics of the Earth's mantle leads to smooth variations (as a result of heat diffusion) as well as steep gradients (related to the exponential dependence of viscosity on temperature), and sharp transitions (resulting from chemical and mineralogical variations caused by lithosphere subduction and from phase transitions). An important challenge is to retain as much of this in the reconstruction despite the need to regularize. The attraction of simple smoothing algorithms giving simple images, is off-set by smoothing away sharp boundaries.

Here we focus on regularization techniques that have the ability to reconstruct sharp edges. Included in our discussion are the total variation prior (TV) of \cite{Rudin.Osher.ea1992} and a number of generalizations and variations on this theme. Among them are the so-called Huber variant of the total variation prior (HTV), the total generalized variation method (TGV) of \cite{Bredies.Kunisch.ea2010}, etc. We also include a regularization method that uses a sparse expansion in terms of wavelet coefficients. The methods are described in explicit detail and some typical features of resulting reconstructions are discussed qualitatively in Section~\ref{reviewsection}.

Although details differ, the common theme of these penalties is that they all impose sparsity of certain local differences of the reconstructed model, which is achieved in practice by using a non-smooth convex $\ell_1$-norm penalty of these local differences. The effect of using an $\ell_1$-norm instead of an $\ell_2$-norm squared is that in practice small coefficients are penalized disproportionately more than large coefficients, which leads to sparse reconstructions (i.e. reconstructions with few nonzero coefficients).

The use of an $\ell_1$-norm penalty implies that non-linear equations have to be solved in the inversion. We therefore devote a section of the paper to the numerical comparison of the convergence speed of some iterative reconstruction algorithms for these problems. Indeed, a second common feature of these regularization methods is that they can be solved with very similar algorithms (with minor changes).

In this paper we will assume a linear relationship between an unknown model $u$ and data $y$ characterized by a matrix $K$. One may think e.g. of the matrix $K$ as containing in its rows the discretized versions (on some grid) of ray or finite frequency sensitivity kernels. The central mathematical theme of this paper is then the numerical minimization of some penalized least squares functionals of type:
\begin{equation}
\min_u \frac{1}{2}\|Ku-y\|^2+\mathrm{penalty}.\label{penproblem1}
\end{equation}
The first term in this functional represents a quadratic data misfit term that depends on the data $y$ and the matrix $K$: $\|Kx-y\|^2=\sum_i(Kx-y)_i^2$. The second term depends on $u$ and serves to regularize the inversion, i.e. it serves to produce a unique model $u$ that satisfies some (qualitative) assumptions imposed on the reconstruction. Here we are principally interested in penalizing local differences $Au$ of the model $u$ in such a way that edges in the model aren't blurred too much. To this end, we will study the problem:
\begin{equation}
\hat u(\lambda)=\arg\min_u \frac{1}{2}\|Ku-y\|^2+\lambda \|Au\|_1.\label{penproblem}
\end{equation}
that contains a non-smooth convex $\ell_1$-norm penalty of local differences $Au$ of $u$.
The precise choice of the differencing matrix $A$ and the precise form of the $\ell_1$-norm $\|\cdot\|_1$ depend on the penalty that is preferred and on the particularities of the model (2D, 3D etc,...). Fully worked out examples of this kind of penalty, which include the TV penalty and generalizations, are presented in Section~\ref{reviewsection}, with further details for solving the associated problem (\ref{penproblem}) in Sections~\ref{algsection} and \ref{detailssection}.

Alternatively, instead of penalizing a least squares term, one may try to solve the constrained problem:
\begin{equation}
\tilde u(\epsilon)=\arg\min_{\|Ku-y\|\leq\epsilon} \|Au\|_1.\label{conproblem}
\end{equation}
The problems (\ref{penproblem}) and (\ref{conproblem}) are equivalent in the sense that for corresponding $\lambda$ and $\epsilon$, the models $\hat u(\lambda)$ and $\tilde u(\epsilon)$ are equal; see e.g. \citep{Hennenfent.Berg.ea2008,Berg2011} for the case $A=1$. The formulation (\ref{conproblem}) is useful as the parameter $\epsilon$ (related in practice to the amount of noise on the data $y$) is perhaps easier to estimate than the penalty parameter $\lambda$ in (\ref{penproblem}).

In Section~\ref{algsection} we write explicit formulas for several iterative algorithms for the penalized problem (\ref{penproblem}) and we compare the speed of convergence (in the special case of the TV penalty). The comparison is done for a matrix $K$ that has no special structure (not the identity matrix, not convolution, ...). We also discuss two iterative algorithms for the constrained problem (\ref{conproblem}). One we believe is new, and we prove convergence. We also make a numerical comparison of the speed of convergence in the constrained case.

Except for one, the algorithms discussed in this paper are fully explicit. They require only application of matrix vector products (with matrices $A$ and $K$), and one or two simple convex projections. It is important to remark that only the precise form of these projections and the choice of the local differencing operator $A$ differ between the various penalization methods discussed in this paper. Again fully worked out examples are given for the penalties discussed in Section~\ref{reviewsection}.

In section~\ref{reviewsection} we will review a number of non smooth regularization terms in the context of a toy synthetic seismic tomography experiment. We discuss the effect the various penalty choices have on the reconstruction. Section~\ref{algsection} lists a number of iterative algorithms that can be used to perform these inversions. A numerical convergence speed comparison is made, both in the penalized case (\ref{penproblem}) as well as in the constrained case (\ref{conproblem}).
Section~\ref{detailssection} explains in detail how these algorithms can be used to solve the problem introduced in Section~\ref{reviewsection}, i.e. we give explicit expressions for the convex projection operators used by the various penalties. Finally, Section~\ref{proofsection} contains a proof of convergence of a new iterative minimization algorithm for the constrained problem (\ref{conproblem}).

\section{Comparison of edge-preserving regularization methods}
\label{reviewsection}

Natural images are often characterized by the presence of both sharp edges and smooth transitions. It is therefore not surprising that much research is done for finding methods of denoising or deconvolving images that can preserve sharp edges. A popular technique is the use of the total variation prior \citep{Rudin.Osher.ea1992}, and many generalizations have been proposed. In this section we review a number of these TV-like techniques, compare synthetic 2D reconstructions of a toy model in global seismic tomography and describe their principal features. The description of computational algorithms for these tasks is deferred to Section~\ref{algsection} and specific implementation details for each of the cases discussed are given in Section~\ref{detailssection}.

\subsection{Total variation penalty and generalizations}
\label{overviewsubsection}

The total variation (TV) penalty is defined by the $\ell_1$-norm of the gradient of the model $u$, i.e. we choose $A=\mathrm{grad}$ so that the penalty in expressions (\ref{penproblem1}) and (\ref{penproblem}) becomes:
\begin{equation}
\mathrm{penalty}=\lambda\|Au\|_1=\lambda\|\mathrm{grad}(u)\|_1=\lambda\sum_\mathrm{pixels}\sqrt{(\Delta_x u)^2+(\Delta_y u)^2}.
\label{TVdef}
\end{equation}
Here $\mathrm{grad}(u)=(\Delta_xu,\Delta_yu)$ and $\Delta_xu$ and $\Delta_yu$ are first order local differences of the 2D model $u$. In formula (\ref{TVdef}) the sum ranges over all pixels of the model $u$. This type of regularization will force the gradient of the model to be (exactly) zero in many places: TV promotes sparsity of the gradient of $u$. In other words, it will give rise to a piece-wise constant reconstruction whenever the data allows for it. This property is lost when the square root in expression (\ref{TVdef}) is removed. This penalty is translation invariant and isotropic (invariant under rotations). Sometimes a non-isotropic TV is used which is defined as a sum over all pixels of $|\Delta_x u|+|\Delta_y u|$. Formula (\ref{TVdef}) is easily adapted to 3D models too.

A slight generalization of the TV penalty is the Huber total variation penalty (HTV). Here we replace the $\sqrt{(\Delta_x u)^2+(\Delta_y u)^2}$ terms of (\ref{TVdef}) by:
\begin{equation}
\mathrm{penalty}=\lambda\sum_\mathrm{pixels}h\left(\sqrt{(\Delta_x u)^2+(\Delta_y u)^2}\right)
\label{HTVdef}
\end{equation}
where the function $h$ is defined by
\begin{equation}
h(t)=\left\{
\begin{array}{lcl}
|t|^2/2\alpha &\quad\mathrm{if}\quad& |t|\leq\alpha\\
|t|-\alpha/2 &\mathrm{if}& |t|\geq\alpha
\end{array}\right.
\end{equation}
(see \cite[Section~4, point (iii)]{Huber1964}).
Clearly HTV reduces to TV for $\alpha=0$. On the other hand, when $\alpha$ is positive, large gradients are penalized as in TV (up to an irrelevant constant), but small gradients are penalized quadratically so that the sparse gradient promoting property of the TV penalty is lost. In practice, local differences will be kept small by this penalty (if possible), but non will be exactly zero as in TV.

A third kind of regularization method that we will consider is a (non-symmetric) total generalized variation penalty (TGV) of \cite[remark 3.10]{Bredies.Kunisch.ea2010}.
It also involves an auxiliary variable $v=(v_x,v_y)$ and the problem (\ref{penproblem}) takes the form
\begin{equation}
\begin{array}{lcl}
\displaystyle\min_{x,v}\frac{1}{2}\|Ku-y\|^2\\[2mm]
\displaystyle+\lambda
\sum_\mathrm{pixels}\sqrt{(\Delta_x u-v_x)^2+(\Delta_y u-v_y)^2}+\alpha
\sqrt{(\Delta_x v_x)^2+(\Delta_y v_x)^2+(\Delta_x v_y)^2+(\Delta_y v_y)^2}
\end{array}\label{GTVdef}
\end{equation}
(see Section~\ref{detailssection} for more details).
This generalization was designed to yield piece-wise smooth models (instead of piece-wise constant models as in standard TV). The penalty depends on an additional parameter $\alpha$ that controls the balance of the first and second term. Many other generalizations of the TV penalty exist; see e.g. \citep{Chambolle167--188,Bredies.Kunisch.ea2010,Chan2000}.

A straightforward variation on the TV penalty is the use of the $\ell_1$-norm, not of the gradient of the model, but of second order differences of the model (i.e. of the local Hessian matrix; $A=\mathrm{Hess}$). We will call it the Hessian penalty (HP):
\begin{equation}
\begin{array}{lcl}
\mathrm{penalty}&=&\displaystyle\lambda\|Au\|_1=\lambda\|\mathrm{Hess}(u)\|_1\\
&=& \displaystyle\lambda\sum_\mathrm{pixels}\sqrt{
(\Delta_x^2u)^2+(\Delta_x\Delta_y u)^2+
(\Delta_y\Delta_xu)^2+(\Delta_y^2 u)^2
}
\end{array}
\label{Hessdef}
\end{equation}
(in each pixel $\mathrm{Hess}(u)$ is a $2\times2$ matrix; see again Section~\ref{detailssection} for details). The use of penalty (\ref{Hessdef}) will force the local Hessian of the model to be zero in most places, i.e. the model will be piecewise linear.
Here, any matrix norm on the local Hessian can be chosen; those matrix norms that are expressed in terms of the eigenvalues will yield a penalty that is isotropic. We choose the Frobenius norm as it is easy to work with and isotropic (an explicit expression is given in Section~\ref{detailssection}).

Regularization strategies that try to express the model as a sparse linear combination of wavelet basis functions \citep{Mallat2009} also fit in the category of penalties that use the $\ell_1$-norm of local differences. In that case one has
\begin{equation}
\mathrm{penalty}=\lambda\|Au\|_1=\lambda\|Wu\|_1\label{waveletpenalty}
\end{equation}
were $W$ represents a wavelet transform. A wavelet transform $W$ interleaves local differencing (and averaging) with subsampling. Such a penalty depends directly on the choice of wavelet basis (choice of $W$). On a cartesian grid, many wavelet families exist \citep{Mallat2009} and they are relatively easy to implement (although not as easy as the local differencing used in TV and its variations). Examples include non-smooth wavelet functions such as the simple Haar wavelets, smooth orthogonal wavelets or smooth symmetric wavelets. However, wavelets are non-stationary (preferred positions exist) and 2D and 3D wavelets are non-isotropic (preferred directions exist). Partial solutions to these two problems include the use of the undecimated WT or of directional transforms such as curvelets \citep{Candes2006} or shearlets \citep{Labate.Lim.ea2005}. Such a strategy has been used in geosciences in \citep{Loris.Nolet.ea2007,Hennenfent.Berg.ea2008,herrmann08crsi,Gholami.Siahkoohi2010,Simons.Loris.ea2011} etc.

\subsection{Qualitative comparison on a synthetic example}

In the remaining part of this section we apply the various regularization methods described above to a toy problem in seismic tomography. We consider a simple synthetic $2D$ input model $u^\mathrm{input}$ defined on the globe (see Figure~\ref{inputfigure}, panels a--e).
The input model is chosen to have a number of zones of constant value ($+1$ in blue color or
$-1$ in red color) with either a sharp edge or a smooth transition in between. Sharp edges are found near North-America. Smooth transitions are found around the Indian Ocean. The edges and transitions are circle-shaped so as not to give preference to any particular
direction (i.e. edges are not aligned with the parametrization grid of the sphere). The center of this `bull's eye' pattern is located at $(36^{\circ} \mathrm{N}, -120^{\circ} \mathrm{E})$. The model has circular symmetry around this point.

In panel (b) of Figure~\ref{inputfigure}, a cross section of this model is shown along a great circle passing through the point $(36^{\circ} \mathrm{N}, -120^{\circ} \mathrm{E})$ and through the North Pole. The second row of Figure~\ref{inputfigure} (panels c and d) depict the length of the local gradient of the input model: $|\mathrm{grad}(u^\mathrm{input})|$. The gradient is sparse; nonzero values can only be found near edges and transitions in the model. The `cubed sphere'  \citep{Ronchi.Iacono.ea1996} was used as a parametrization of the sphere. In this toy experiment the model space has dimension $98304$ ($=128\times128\times 6$ pixels).

In order to set up a synthetic inverse problem, we use a set of $8490$ seismic rays  corresponding to actual earthquake positions and seismic stations \citep{Trampert+95,Trampert+96a,Trampert+2001}. The rays (discretized on the $128\times128\times6$ grid) make up the rows of a $8490\times 98304$ matrix $K$. In panel (f) of Figure~\ref{inputfigure} the sum of all rows of the matrix $K$ is shown. It represents the illumination of the globe by the $8490$ ray paths. Most rays are concentrated around the Pacific Ocean. The matrix $K$ does not have any special structure that can be exploited by a minimization algorithm.

\begin{figure}
\centering\resizebox{\textwidth}{!}{\includegraphics{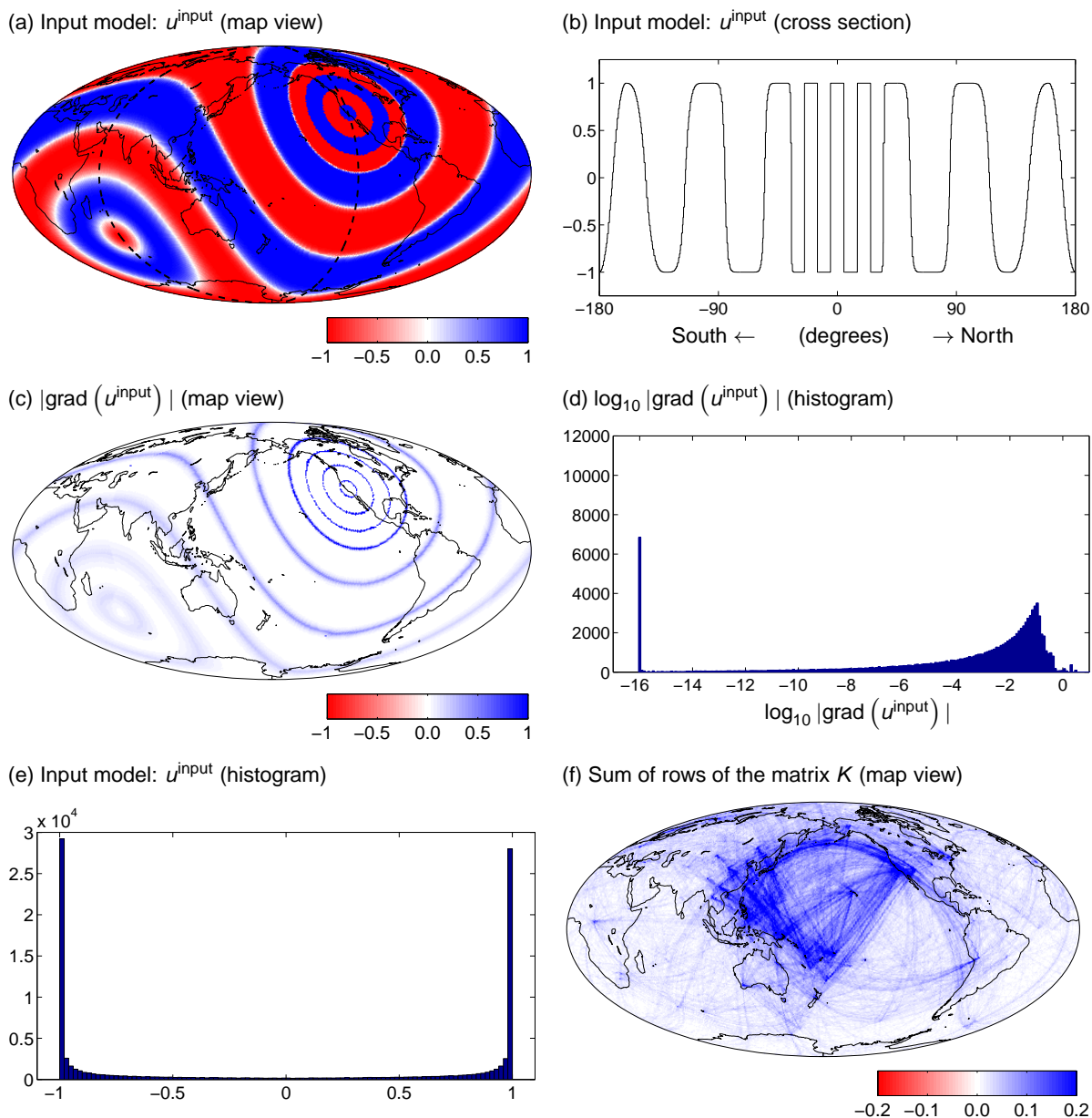}}
\caption{Toy input model for synthetic seismic tomography experiment (see Section~\ref{reviewsection}): (a) input model $u^\mathrm{input}$ with both sharp and smooth edges between zones of constant model value; (b) cross section along a great circle passing through $(36^{\circ} \mathrm{N}, -120^{\circ} \mathrm{E})$ (dashed line in panel (a)); the horizontal axis measures the degrees of separation from this point in Northern direction; (c) length of the local gradient of input model (mostly zero except at edges and transitions); (d) histogram of the length of the gradient (on a logarithmic scale; the peak at $-16$ corresponds zero gradients); (e) Histogram of model values; (f) Illumination of the globe by the $8490$ ray paths in the matrix $K$.}\label{inputfigure}
\end{figure}

Next, artificial data $y$ are constructed using the formula $y=Ku^\mathrm{input}+n$, where $n$ is chosen as gaussian noise of magnitude $\|n\|=0.1\times\|Ku^\mathrm{input}\|$ (in other words $10\%$ gaussian noise is added). Our goal is to use the different regularization techniques described in Section~\ref{overviewsubsection} to obtain faithful reconstructions of $u^\mathrm{input}$ (from the knowledge of $y$ and $K$ only), and to compare some of their principal characteristics.

We do not expect perfect reconstruction ($u^\mathrm{output}\neq u^\mathrm{input}$) because the problem is too under-determined (only $8490$ data for $98304$ unknowns), and because the data contain noise. The use of TV style regularization methods is appropriate as the model has several areas of constant model value, together with some edges. However, we expect that the TV penalty will unfortunately also enforce piecewise constant model values near the smooth transitions. We compare with the other regularization methods.

\begin{figure}
\centering\resizebox{\textwidth}{!}{\includegraphics{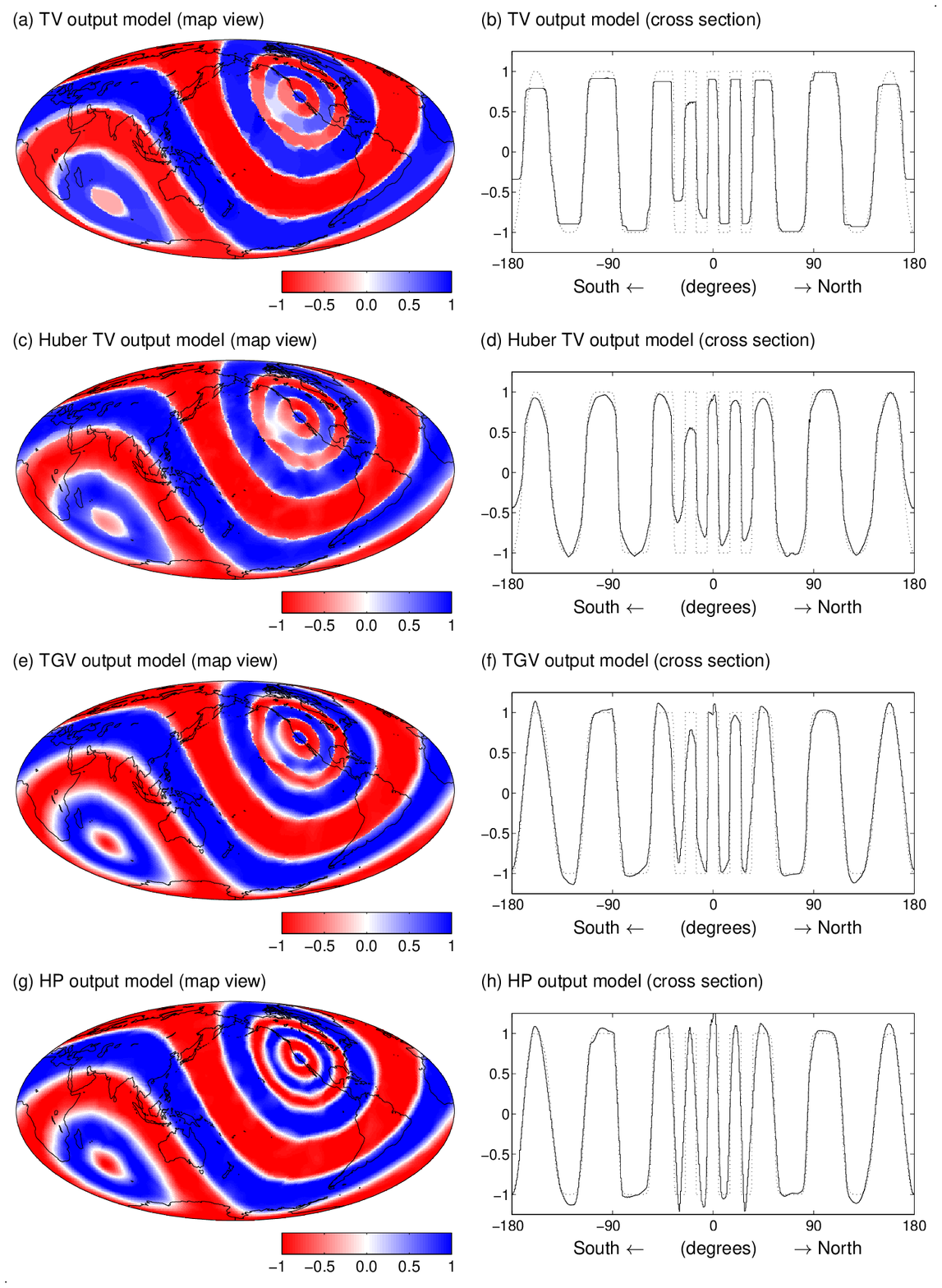}}
\caption{Reconstructions of the input model for four different regularization methods. The cross sections on the right show the output models (solid lines) and the input model (dashed lines). (a)-(b) the total variation reconstruction has an obvious piecewise constant character. The amplitude of the model appears damped in some regions. (c)-(d) the Huber-TV reconstruction is similar to the TV reconstructions but the staircasing effect is less pronounced. (e)-(f) the TGV reconstruction. (g)-(h) the Hessian style reconstruction is piecewise linear.}\label{reconstructionfigure}
\end{figure}

\begin{figure}
\centering\resizebox{\textwidth}{!}{\includegraphics{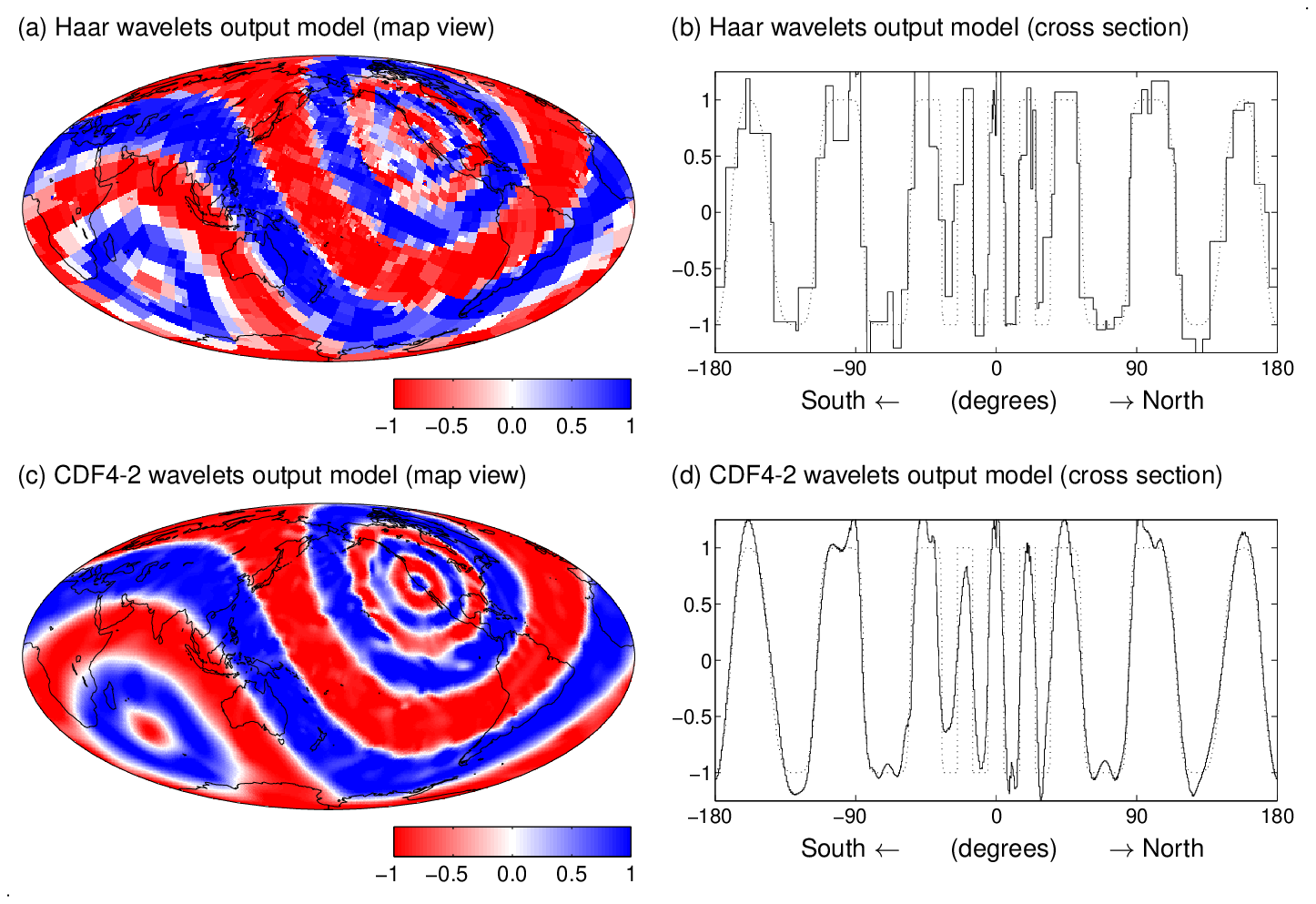}}
\caption{Reconstructions of the input model for two different wavelet regularization methods. The cross sections on the right show the output models (solid lines) and the input model (dashed lines). (a)-(b) a reconstruction that is sparse in the Haar wavelet basis. (c)-(d) a reconstruction that is sparse in the CDF4-2 wavelet basis. The Haar model uses only $1668$ nonzero wavelet basis coefficients, and the CDF 4-2 reconstruction uses a mere $996$ nonzero wavelet basis coefficients.}\label{waveletreconstructionfigure}
\end{figure}

Four different reconstructions are made, corresponding to TV, HTV, TGV and HP penalties as described by formulas (\ref{TVdef}), (\ref{HTVdef}), (\ref{GTVdef}) and (\ref{Hessdef}) in combination with functional (\ref{penproblem}). The precise iterative algorithm that is used to solve these four instances of problem (\ref{penproblem}) is described in Section~\ref{algsection}, formula (\ref{gista}). A detailed description is deferred to Section~\ref{detailssection}. At the moment we limit ourselves to discussing some qualitative differences of the resulting output models. It is important to mention right away that all reconstructed models fit the data equally well: $\|Ku^\mathrm{output}-y\|=\|n\|$.
The four reconstructions are displayed in Figure~\ref{reconstructionfigure}.

In Figure~\ref{reconstructionfigure}, a map view of the four reconstructed models is displayed on the  left hand side and a cross section of the four output models is shown in the right hand side column (for reference also a cross section of the input model is shown in dotted lines). As before the cross sections follow the great circle that passes through the point $(36^{\circ} \mathrm{N}, -120^{\circ} \mathrm{E})$ and through the North pole. Note however that the output models have lost (some of) their circular symmetry, and a different cross section (along another great circle through the point $(36^{\circ} \mathrm{N}, -120^{\circ} \mathrm{E})$) will yield a slightly different profile.

The piece-wise constant nature of the TV output model is clearly noticeable in panels (a) and (b). There is also some loss of amplitude in certain regions. Distinctively, sharp edges (e.g. near North America) are preserved but smooth transitions of the input model (e.g. near Africa and the Indian Ocean) are replaced by a succession of sharp edges (the so-called staircasing effect). In other words a TV penalty imposes a piecewise constant reconstruction as much as the data allows for it. Such a reconstruction always has sharp edges, but it is not guaranteed that the reconstructed edge will be at exactly the same location as in the input model (the position is influenced by noise on the data and by the limited number of rays).

The Huber TV reconstruction looks similar to the TV reconstruction but it has edges that are somewhat smoother. The piecewise constant nature of the reconstruction is lost. It should be said that the precise amount of smoothing depends on the value of the extra parameter $\alpha$ that is present in the Huber penalty. For $\alpha=0$ one recovers the TV reconstruction, but for large $\alpha$ one obtains a reconstruction that is penalized by the $\ell_2$-norm squared of the gradient.

The TGV reconstruction exhibits edges that are comparable to the Huber TV reconstructions, but amplitudes are damped less. Some edges are replaced by linear transitions. The TGV penalty also depends on a parameter $\alpha$ that can be tuned between the TV case (large $\alpha$) and the HP case (small $\alpha$). Linear transitions also characterize the HP reconstruction. Here the penalty is proportional to the $1$-norm of the second derivative of the model. For the example shown, the difference between the TGV reconstruction lies somewhere between the TV and HP reconstruction.

Finally, Figure~\ref{waveletreconstructionfigure} shows two reconstructions that were obtained by imposing a $\ell_1$-norm penalty of wavelet coefficients of the model. Panels (a) and (b) of this Figure show a reconstruction that is sparse in the Haar wavelet basis. The number of nonzero basis coefficients is $1668$ (out of a possible $98304$). These wavelets are orthonormal, but non-smooth. The second example (panels c and d) uses a wavelet family with very smooth wavelets (CDF 4-2) of \cite{Cohen1992}. This model has only $996$ nonzero coefficients (out of a possible $98304$). The smoothness of the wavelet bases functions is clearly reflected in the nature of the reconstructions shown in Figure~\ref{waveletreconstructionfigure}. These two reconstruction fit the $8490$ data as well as the four previous reconstructions. The Haar wavelet reconstruction is not visually appealing. Even though the Haar basis functions are piece-wise constant, the Haar reconstruction does not look similar to the TV reconstruction. If wavelets are used, one not only needs to choose the wavelet family (Haar, \ldots), but also the number of levels of the wavelet transform. In this example, we chose to make a 4 level wavelet transform.

\begin{figure}
\centering\resizebox{\textwidth}{!}{\includegraphics{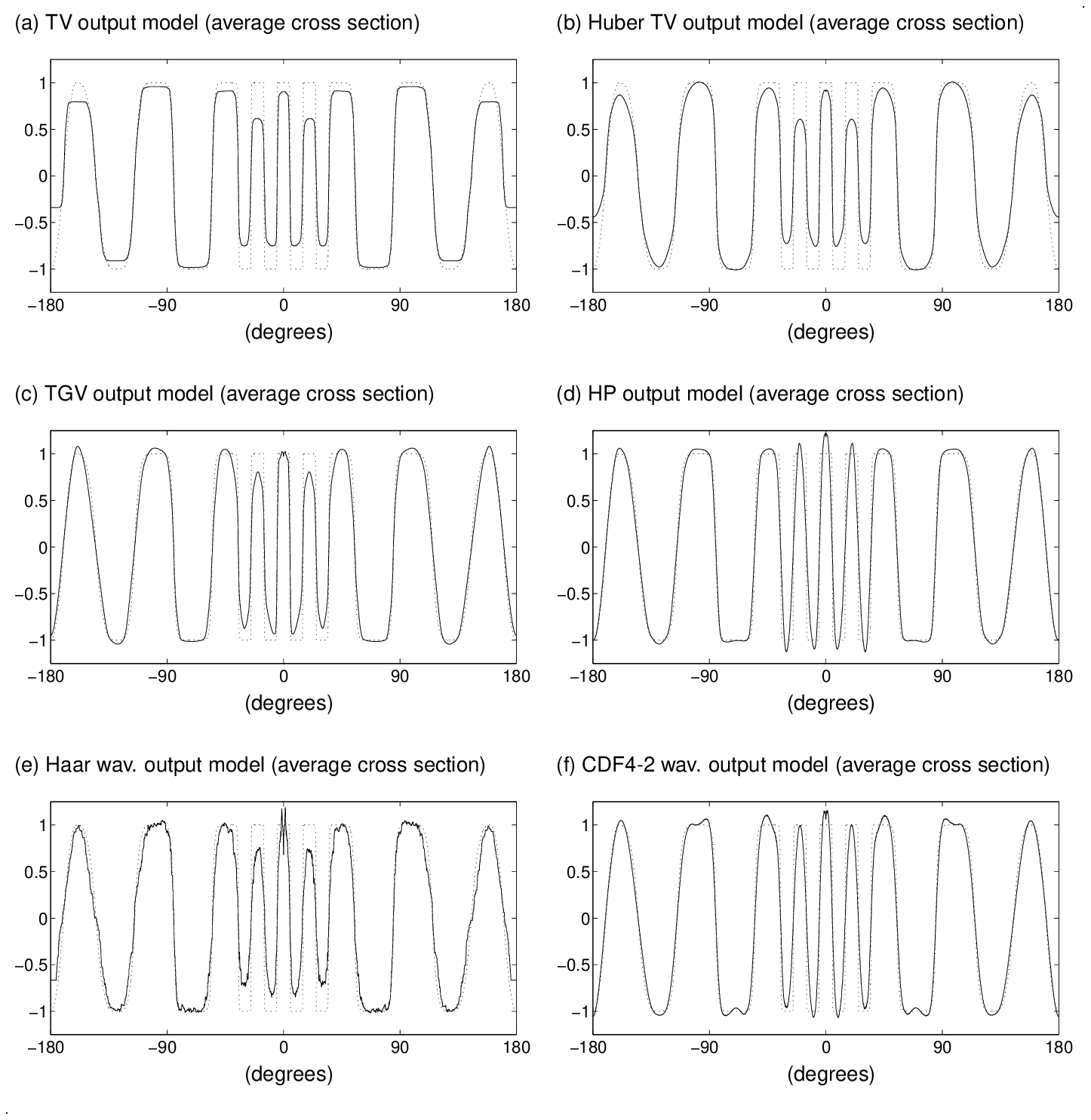}}
\caption{Averaged cross sections of the six reconstructions. The average is taken over $36$ great circles passing through the point $(36^{\circ} \mathrm{N}, -120^{\circ} \mathrm{E})$, and with azimuth $0,10,20,\ldots,350$ degrees w.r.t local North.}\label{averagecrosssectionfigure}
\end{figure}

\begin{figure}
\centering\resizebox{\textwidth}{!}{\includegraphics{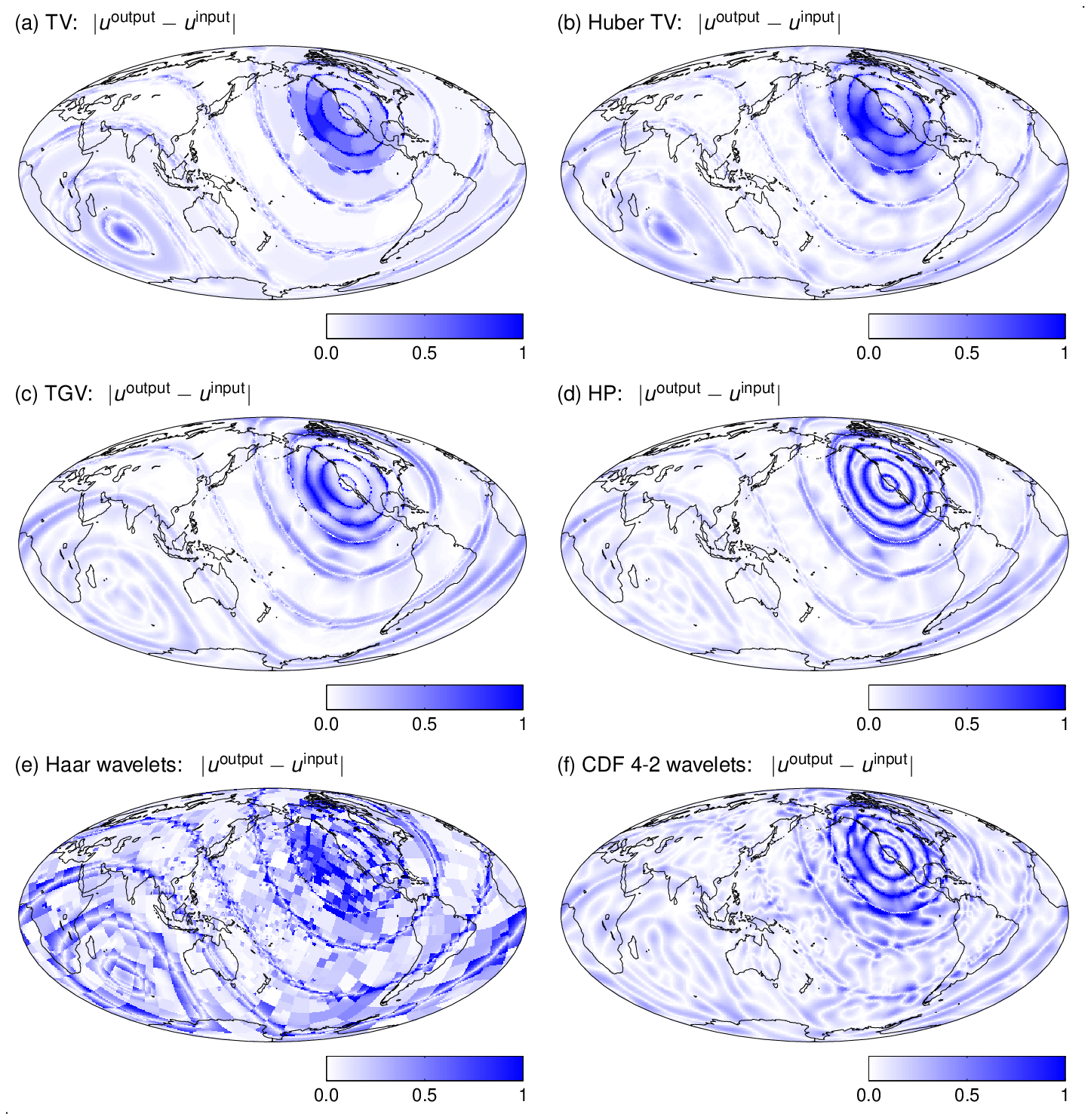}}
\caption{Map view of the differences $|u^\mathrm{output}-u^\mathrm{input}|$ for the six reconstructions. Large errors may occur near edges (as a result of smoothing or change of edge position in $u^\mathrm{output}$) and in other places (as a result of amplitude damping in $u^\mathrm{output}$).}\label{differencefigure}
\end{figure}

\begin{figure}
\centering\resizebox{\textwidth}{!}{\includegraphics{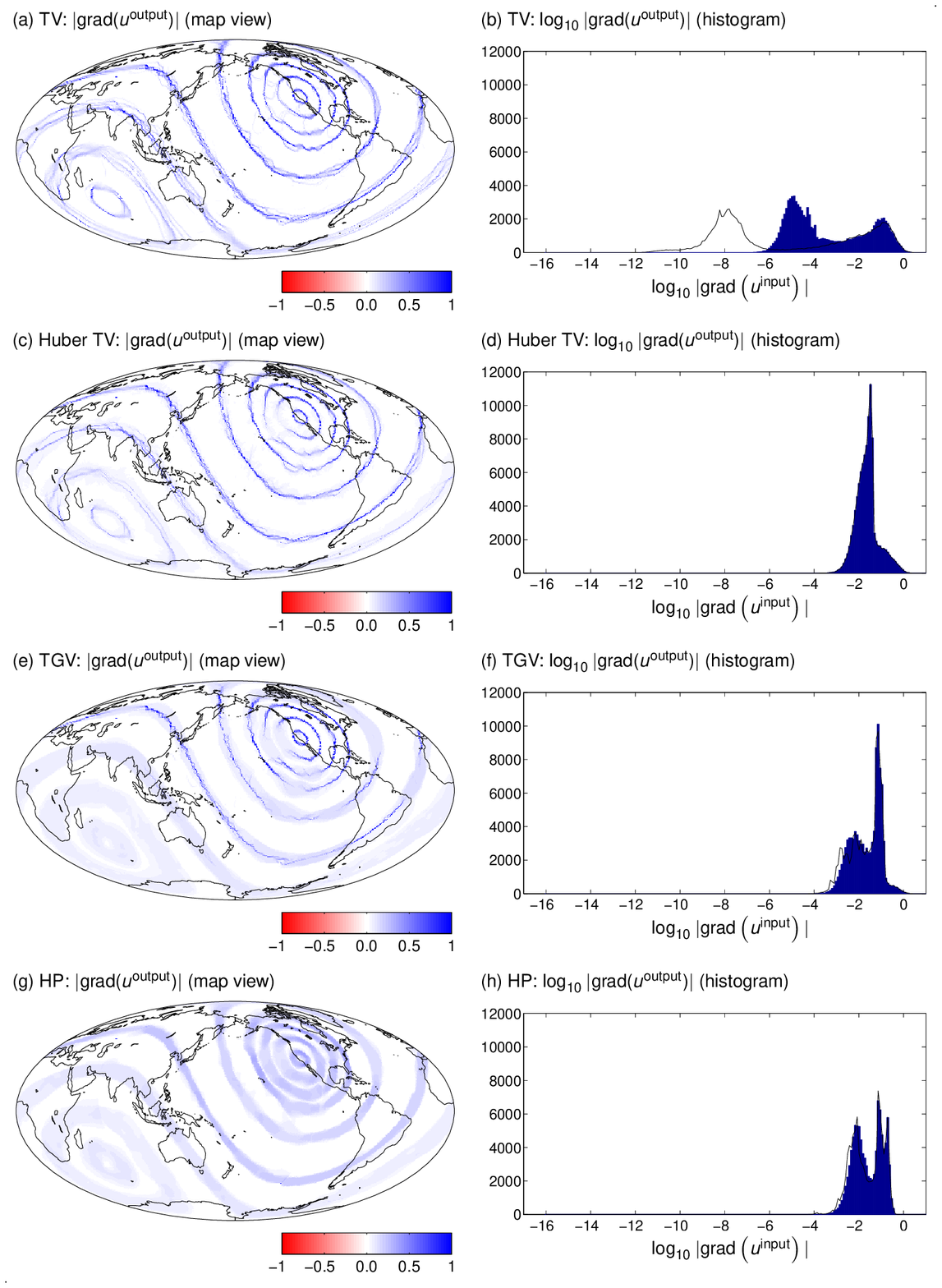}}
\caption{Gradient field of four reconstructed models of Section~\ref{reviewsection}. On the left hand side is a map view of the norm of the local gradient of the output models. On the right hand side is a histogram (of the logarithm) of the norm of the local gradient of these output models (in blue, after $1000$ iterations; a black line indicates the position of the histogram after $10^5$ iterations). For the first reconstruction (TV), one expects this gradient field to be zero in most places. The Huber-TV reconstruction penalizes the local gradient but small (nonzero) values remain. For the last two reconstructions one does not expect sparse (or almost sparse) gradients, but a piecewise constant gradient.}\label{gradfigure}
\end{figure}

The original input model possesses an azimuthal symmetry around the point $(36^{\circ} \mathrm{N}, -120^{\circ} \mathrm{E})$. However, the reconstructed models have lost this symmetry. This is due to the lack of data, the uneven distribution of the ray coverage, the noise on the data and the penalties imposed. For instance, the (circular) edges that are present in the input model are reconstructed, but the precise position is changed depending on this azimuthal angle.
Instead of showing a single cross section of the output models, as was done in the second column of Figures~\ref{reconstructionfigure} and \ref{waveletreconstructionfigure}, it makes sense to also calculate the cross sections along many great circles passing through the same point $(36^{\circ} \mathrm{N}, -120^{\circ} \mathrm{E})$ and average them. The average over $36$ great circles is displayed in Figure~\ref{averagecrosssectionfigure}; in the case of the TV reconstruction one sees that such an average cross section does not exhibit the same sharp edges as a single cross section does. The cause is simply that the output model is not circularly symmetric around this point. In other words the reconstructed edges do not necessarily coincide with the original edges. Averaging this effect leads to some smoothing. This also occurs in the Haar reconstruction. The effect is less evident on the other reconstruction that are already smooth. Figure~\ref{differencefigure} shows a map view of the difference $|u^\mathrm{output}-u^\mathrm{input}|$ between input model and reconstruction.

Figure~\ref{gradfigure} shows the length of the local gradient of four reconstructed models in map view and as a histogram. These should be compared to Figure~\ref{inputfigure}, panels (c) and (d). The gradient of the TV reconstructed model in panels (a) and (b) is sparse, although $|\mathrm{grad}(u^\mathrm{out})|$ is nowhere exactly equal to zero. This is a result of stopping the iterative reconstruction algorithm after a finite number of iterations (in this case after $1000$ iterations). This behavior is best seen on the histogram where the peak between $10^{-4}$ and $10^{-6}$ represents values of $|\mathrm{grad}(u^\mathrm{out})|$ that have not fully converged to zero. A black line shows the position of the histogram after $10^5$ iterations. We can clearly see that the secondary peak has moved to the left, i.e. to values around $10^{-8}$. The primary peak stays roughly in the same position, indicating little change to the significant values of the gradient after $1000$ iterations.

On the second row of Figure~\ref{gradfigure} we see that the Huber TV penalty does not lead to sparse gradients (for $\alpha\neq 0$). Although many values are small (colored in white on the map view), there is no heavy tail visible on the left hand side of the corresponding histogram in panel (d).

The TGV and HP penalties (panels (e)--(h)) give rise to local gradients that are markedly different in character from the TV and Huber TV cases. Here, as in the Huber TV case, we do not expect sparse gradients.  The HP reconstruction imposes a piece-wise linear solution, as much as the data allows for. Therefore the gradient of the output model will be piecewise constant. This is clearly observed in panel (e) of Figure~\ref{gradfigure}. In case of the TGV reconstruction, the results lie somewhere between the TV reconstruction and the HP reconstruction.

In case of the HP reconstruction, we expect a model that has sparse Hessian field (if one would plot the Frobenius norm of the local Hessian (see formulas (\ref{Hessdef}), (\ref{hessmatrix}) and (\ref{frobeniusdef}) for explicit expressions) the sparse nature of this Hessian field would be apparent). In other words, the second derivatives of the model will be mostly zero. This means that the output model will be piecewise linear.

The six imaging models (\ref{TVdef})--(\ref{waveletpenalty}) give rise to qualitatively different reconstructions. The associated minimization problems (\ref{penproblem}) or (\ref{conproblem}) however, can all be solved using the same iterative algorithms, with minor variations. The next section contains a description and numerical comparison of several suitable algorithms. Section~\ref{detailssection} contains the technical details on how these algorithms can be applied to the penalized problems discussed above.

\section{Algorithms}
\label{algsection}

In this section we review a number of iterative algorithms that can be used for solving problems (\ref{penproblem}) and (\ref{conproblem}) and make a numerical comparison of them. We start with some known algorithms for the penalized problem (\ref{penproblem}) in Subsection~\ref{penalgsection}.

In Subsection~\ref{conalg} we perform the same kind of comparison for two iterative algorithms for the constrained problem (\ref{conproblem}). One is a primal dual hybrid gradient algorithm of \cite{Esser.Zhang.ea2010} and the second one is, as far as the authors know, new. A proof of convergence of the second algorithm is therefore included in Section~\ref{proofsection}.

The algorithms presented generally involve matrix-vector multiplications and vector space operations (addition, multiplication with scalar). Furthermore they also involve a simple convex projection operator $P_\lambda$ or a simple soft-thresholding operation $S_\lambda$. These are defined componentwise by $P_\lambda((w_1,\ldots,w_N))=(P_\lambda(w_1),\ldots,P_\lambda(w_N))$ and $S_\lambda((w_1,\ldots,w_N))=(S_\lambda(w_1),\ldots,S_\lambda(w_N))$ with:
\begin{equation}
P_\lambda(w_i)=
\left\{
\begin{array}{ll}
\frac{w_i}{\|w_i\|}\lambda \quad&  \|w_i\|>\lambda\\
w_i & \|w_i\|\leq \lambda
\end{array}
\right.\qquad\mathrm{and}\qquad S_\lambda(w_i)=w_i-P_\lambda(w_i).\label{PSdef}
\end{equation}
Here $w_i$ may itself be an element of $\mathbb{R},\mathbb{R}^2,\ldots$. We refer to Section~\ref{detailssection} for some more information. The precise details may vary depending on wether one treats a 2D or 3D problem, or the precise form of the penalty (TV, Huber TV etc).

Below we will use the symbol $\|A\|^2$ to denote the largest eigenvalue of $A^TA$, and $\|K\|^2$ to denote the largest eigenvalue of $K^TK$. Apart from the model variable $u$, the iterative algorithms below also use one or several auxiliary variables $\bar u$, $w$ (which is the subgradient of $\lambda\|\cdot\|_1$), etc. The starting point of each of the algorithms in Subsections~\ref{penalgsection} and \ref{conalg} is arbitrary.

\subsection{Algorithms for penalized problems}
\label{penalgsection}

The generalized iterative soft-thresholding algorithm \citep{Loris.Verhoeven2011}:
\begin{equation}
\left\{
\begin{array}{lcl}
\bar u^{n+1}&=&u^n+\tau_1 K^T (y-Ku^n)-\tau_1 A^T w^n\\
w^{n+1}&=&P_\lambda\left(w^n+\frac{\tau_2}{\tau_1}A\bar u^{n+1}\right)\\
u^{n+1}&=&u^n+\tau_1 K^T (y-Ku^n)-\tau_1 A^T w^{n+1},
\end{array}
\right.\label{gista}
\end{equation}
converges to a minimizer of the penalized problem (\ref{penproblem}) if step sizes $\tau_1$ and $\tau_2$ are chosen as $\frac{1}{2}\tau_1\|K\|^2<1$ and $\tau_2\|A\|^2<1$.

When applied to problem (\ref{penproblem}) the algorithm of \cite{Chambolle.Pock2010} takes the form:
\begin{equation}
\left\{
\begin{array}{lcl}
w^{n+1}&=&P_\lambda\left(w^n+\sigma\frac{\tau_2}{\tau_1}A \bar u^{n}\right)\\
v^{n+1}&=&\frac{1}{1+\sigma}v^n+\frac{\sigma}{1+\sigma}(K\bar u^n-y)\\
u^{n+1}&=&u^n-\tau_1 K^Tv^{n+1}-\tau_1 A^T w^{n+1}\\
\bar u^{n+1}&=&u^{n+1}+\theta\left(u^{n+1}-u^n\right).
\end{array}
\right.\label{champock}
\end{equation}
It converges to a minimizer of problem (\ref{penproblem}) for $\theta=1$ and $\sigma\|\tau_1 K^TK+\tau_2 A^T A\|<1$.

An explicit Bregman algorithm:
\begin{equation}
\left\{
\begin{array}{lcl}
u^{n+1}&=&u^n+\tau_1 K^T (y-Ku^n)-\tau_1 A^T\left[w^n +\frac{1}{\theta}(w^n-w^{n-1})\right]\\
w^{n+1}&=&(1-\theta)w_n+\theta P_\lambda\left(w^n+\frac{\tau_2}{\tau_1}A u^{n+1}\right)\\
\end{array}
\right.\label{explbregman}
\end{equation}
converges for $0<\theta\leq 1$ and $\|\tau_1 K^TK+\tau_2
A^T A\|<1$ to a minimizer of problem (\ref{penproblem}). It is found from \cite[Equations 5.9, 5.10 and 5.11]{Zhang.Burger.ea2011}.

The conditions for convergence on $K$ and $A$ for algorithms (\ref{champock}) and (\ref{explbregman}) are different from the condition in algorithm (\ref{gista}). In the last two algorithms $K$ and $A$ are coupled in a single condition, whereas two separate conditions are used for algorithm (\ref{gista}). In other words, the conditions in algorithm (\ref{gista}) only depend on the matrix norms of $K$ and $A$, whereas in algorithms (\ref{champock}) and (\ref{explbregman}) the conditions also depend on the orientation of the singular vectors of $K$ with respect to the singular vectors of $A$. This makes algorithm (\ref{gista}) a little bit easier to use.
Also, no $\frac{1}{2}$ is found in front of $K^TK$in the conditions for algorithms (\ref{champock}) and (\ref{explbregman}).

It is also worth noting that for $\sigma=1$ and $\theta=1$ the algorithm (\ref{champock}) reduces to algorithm (\ref{explbregman}) for $\theta=1$. We omit the details of this calculation.

The above three algorithms are fully explicit: they only require the application of the matrix $K$ and its transpose at every step, and the application of the operator $A$ and its transpose. The nonlinear operator $P_\lambda$ also has a simple implementation (see also Section~\ref{detailssection} for explicit expressions).

In addition to the three explicit algorithms above, we also include the following implicit algorithm in our comparison (here implicit means that a linear system needs to be solved in every step):
\begin{equation}
\left\{
\begin{array}{lcl}
u^{n+1}&=&(K^TK+\alpha A^TA)^{-1}\left(K^Ty-A^T(w^n-\alpha z^n)\right)\\
z^{n+1}&=&S_{\lambda/\alpha}\left(Au^{n+1}+w^n/\alpha\right) \\
w^{n+1}&=&w^n+ \alpha (Au^{n+1}-z^{n+1})\\
\end{array}
\right.\label{implbregman}
\end{equation}
for $0<\theta\leq 1$ and a parameter $\alpha$. It is the so-called split-Bregman method  \citep{Goldstein.Osher2009}.

The split Bregman method was used in a seismic tomography context in \citep{Gholami.Siahkoohi2010}. The method seems appropriate for some special matrices $K$ and $A$, for which the inverse $(K^TK+\alpha A^TA)^{-1}$ is easy (such as e.g. the combination convolution $K$ and $A=$grad which can both be diagonalised in Fourier space) or when legacy code exists: If $K$ and $A$ have no special structure, the first line in the split-Bregman algorithm (\ref{implbregman}) itself needs an iterative algorithm.

The above four algorithms are compared in the framework of the TV reconstruction of Section~\ref{reviewsection}, i.e. using the matrix $K$ and the synthetic data $y$ of Secion~\ref{reviewsection} and the choice $A=$grad. A reference minimizer $u^\mathrm{ref}$ of problem (\ref{penproblem}) was first obtained using $10^5$ iterations of algorithm (\ref{gista}) with $\tau_1=1.99/\|K\|^2$ and $\tau_2=0.99/\|A\|^2$. This was done for two values of $\lambda$. One value of $\lambda$ yields a reconstructed model that fits the data to noise level $\|Ku^\mathrm{ref}-y\|/\|n\|\approx 1$, and another (larger) value of $\lambda$ that underfits the data: $\|Ku^\mathrm{ref}-y\|/\|n\|\approx 3$. In this way, it is possible to evaluate the performance of the algorithms for different values of $\lambda$. A larger value of $\lambda$ would be used for cases with more noise; a smaller value of $\lambda$ would be used for cases with less noise.

In Figure~\ref{penspeedfigure}, panels (a) and (b), the evolution of the relative error $\|u^n-u^\mathrm{ref}\|/\|u^\mathrm{ref}\|$ to the true minimizer is plotted as a function of computing time for these four algorithms. The algorithms ran for $1000$ iterations each (starting from zero model), except the split-Bregman algorithm which ran for only $200$ outer iterations (and $5$ inner iterations of conjugate gradient type). On the horizontal axis computing time is used rather than number of iterations as the split-Bregman algorithm's outer iteration step is more expensive than a single step of the explicit algorithms. Indeed, the explicit algorithms use a single application of the matrices $K$, $K^T$, $A$ and $A^T$ in each iteration whereas the implicit split-Bregman algorithm needs to solve a linear system in each step. The times mentioned in Figure~\ref{penspeedfigure} were obtained for this specific example on a single 2.66GHz CPU with 12GB memory running Matlab R2009a.  They cannot easily be extrapolated to other problems (with different numbers of variables, data, other matrices $K$ and $A$, different hardware). Panels (c) and (d) show the evolution of the functional (\ref{penproblem}) as a function of time.
For algorithms (\ref{gista}) and (\ref{champock}), one can show that the functional (\ref{penproblem}) tends to its limiting value as $\mathcal{O}(1/n)$ where $n$ is the number of iterations \citep{Loris.Verhoeven2011,Chambolle.Pock2010}. There is no such result for the error $\|u^{n}-\hat u\|$.

The algorithms that perform the best for the smaller value of $\lambda$ perform the worst for the larger value of $\lambda$. Moreover we see that the smallest value of $\|u^n-u^\mathrm{ref}\|/\|u^\mathrm{ref}\|$ does not necessarily correspond to the smallest value of $F(u^n)-F(u^\mathrm{ref})$. We conclude that all algorithms mentioned are suitable for solving problem (\ref{penproblem}), but that convergence may depend on a good choice of the step size parameters used. For algorithm (\ref{gista}) the larger step size $\tau_1=1.99/\|K\|^2$ is suitable for cases with low noise, whereas the choice $\tau_1=0.99/\|K\|^2$ is better in high noise cases. Generally, the error between $u^n$ and the true minimizer lies between 1 and 10\% after $1000$ iterations.

\begin{figure}
\centering\resizebox{\textwidth}{!}{\includegraphics{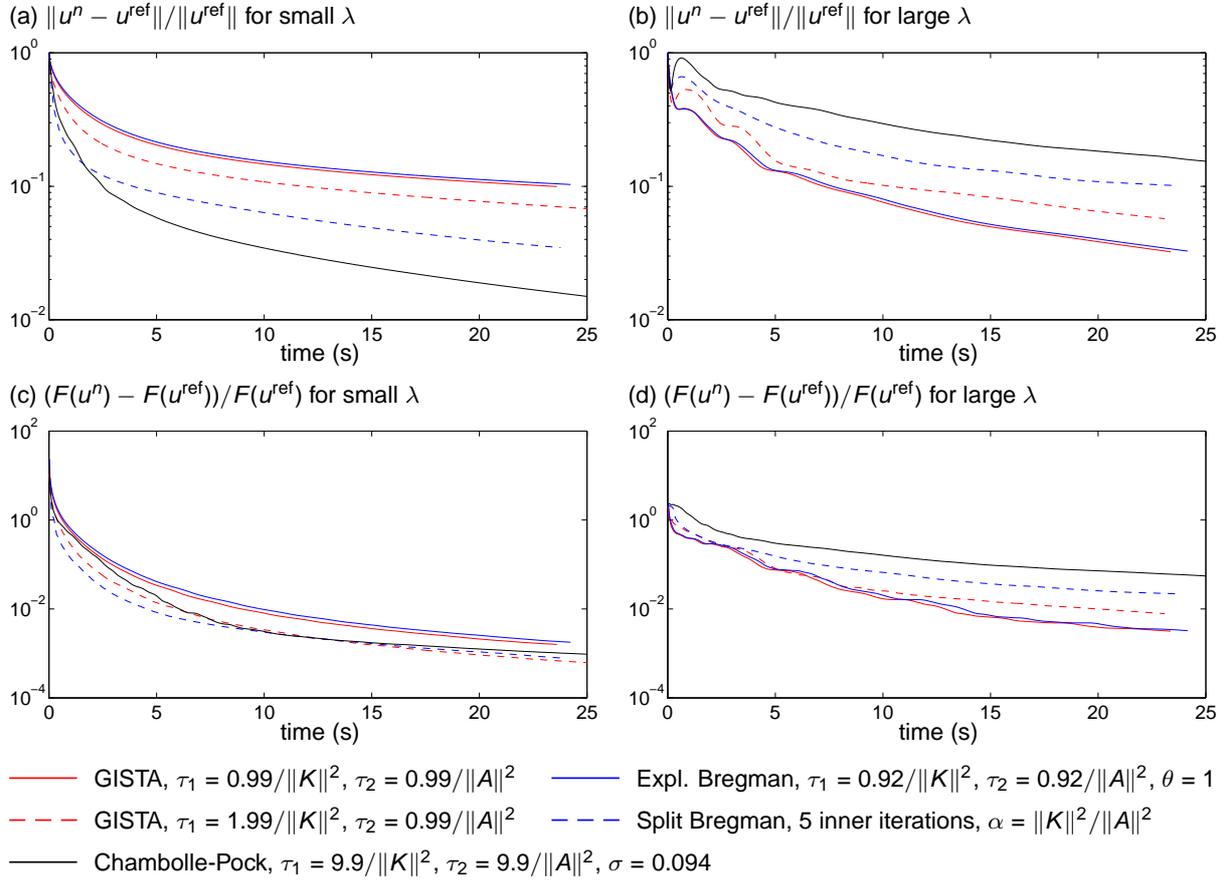}}
\caption{Convergence rate of the iterative minimization algorithms of Section~\ref{penalgsection}, with relative error to the true minimizer on the top and value of the functional on the bottom. The right hand side column refers to an experiment with a larger penalty parameter and left hand side column refers to an experiment with a smaller penalty parameter. The best algorithm depends on the value of the penalty parameter $\lambda$, on the step sizes used, and on the criteria (distance to true minimizer or value of functional). Computing time instead of number of iterations is used for fair comparison.}\label{penspeedfigure}
\end{figure}

It was already remarked that these algorithms do not give sparse $Au^n$ at every iteration step; only for the limiting value of $u^\infty$ will $Au^\infty$ be sparse. When $A$ is the unit matrix or an orthogonal matrix, then the generalized iterative soft threshdolding algorithm (\ref{gista}) reduces to the traditional soft-thresholding algorithm of \cite{Daubechies2004b} which does produce sparse $u^n$ at every step:
\begin{equation}
u^{n+1}=S_{\lambda\tau_1}\left(u^n+\tau_1 K^T (y-Ku^n)\right)\quad\mathrm{for}\quad
\min_u\frac{1}{2}\|Ku-y\|^2+\lambda\|u\|_1\label{ista}
\end{equation}
for step sizes $\tau_1\|K\|^2<2$. Such a simplified algorithm could be used for sparse recovery in a wavelet basis as was done in \citep{Loris.Nolet.ea2007} in a seismic tomography context. It suffices to make the change of variables $u=W^{-1}w$ in expressions (\ref{penproblem}) and (\ref{waveletpenalty}) to use algorithm (\ref{ista}) with $K$ replaced by $KW^{-1}$ and $u^{n}$ replaced by $w^n$. Such a change of variables is not possible for the TV penalty as $A=\mathrm{grad}$ is not invertible. An accelerated (more efficient) version of algorithm (\ref{ista}), the so-called Fast Iterative Soft-Thresholding Algorithm (FISTA), is described in \citep{Beck.Teboulle2008} (for $\tau_1\|K\|^2<1$). See also \cite{Yamagishi2011} for some recent developments.

For $K=1$ (or orthogonal) and $\tau_1=1$ the algorithm (\ref{gista}) reduces to the following projected gradient algorithm:
\begin{equation}
\left\{
\begin{array}{lcl}
w^{n+1}&=&P_\lambda\left(w^n+\tau_2 A (K^T y- A^T w^n) \right)\\
u^{n+1}&=& K^T y- A^T w^{n+1}
\end{array}
\right.
\quad\mathrm{for}\quad
\min_u\frac{1}{2}\|u-y\|^2+\lambda\|Au\|_1
\label{projgrad}
\end{equation}
for step size $\tau_2\|A\|^2<2$ \citep{Chambolle2005}.
This algorithm can also be accelerated \citep{Nesterov1983a}. The case $K=1$ and $A=\mathrm{grad}$ corresponds to denoising with a total variation penalty. A recent comprehensive comparison of various numerical algorithms for this task can be found in Section 6.2.1 of \citep{Chambolle.Pock2010}.

\subsection{Algorithms for constrained problems}
\label{conalg}

In this subsection we compare two explicit iterative algorithms for the constrained minimization problem (\ref{conproblem}). This problem depends on the parameter $\epsilon$ which determines the desired data misfit: $\|Kx-y\|\leq \epsilon$. Such a formulation of a reconstruction problem is therefore useful when one wants to fit the data to noise level. It then suffices to set the parameter $\epsilon$ equal to the norm of the noise vector (Morozov discrepancy principle).

In addition to the projection operator $P_\lambda$ the constrained algorithms below also use another projection operator $Q_{y,\epsilon}$ defined as:
\begin{equation}
Q_{y,\epsilon}(v)=
\left\{
\begin{array}{lrl}
y+\epsilon \frac{v-y}{\|v-y\|} & \quad\mathrm{for}\quad & \|v-y\|> \epsilon\\
v & \quad\mathrm{for}\quad & \|v-y\|\leq \epsilon
\end{array}
\right.\qquad\mathrm{and}\qquad T_{y,\epsilon}(v)=v-Q_{y,\epsilon}(v)
\end{equation}
given $y$ and $\epsilon$. In other words, $Q_{y,\epsilon}$ is the projection on the $\ell_2$ ball of radius $\epsilon$ centered at $y$ and $T_{y,\epsilon}$ is an associated thresholding function.

We start with the so-called primal-dual hybrid gradient algorithm (PDHGMp) found in equation 5.3 of \citep{Esser.Zhang.ea2010}. In the present notation this algorithm takes the form:
\begin{equation}
\left\{
\begin{array}{lcl}
u^{n+1}&=&u^n-\tau_1 K^T(v^n+(v^n-v^{n-1})/\theta)-\tau_1 A^T (w^n+(w^n-w^{n-1})/\theta)\\
w^{n+1}&=&P_{\mu/\tau_1}\left(w^n+\frac{\tau_2}{\tau_1}A u^{n+1}\right) \\
v^{n+1}&=&(1-\theta)v^n+\theta T_{y,\epsilon}\left(v^n+ K u^{n+1}\right)
\end{array}
\right.\label{PDHGmPalg}
\end{equation}
and converges for $0<\theta\leq 1$, $\mu>0$, $\|\tau_1 K^T K+\tau_2 A^T A\|<1$. This algorithm can also be derived from algorithm $(A_1)$ on page $28$ of \citep{Zhang.Burger.ea2011} and (for $\theta=1$) from algorithm 1 in \citep{Chambolle.Pock2010}. We omit the details.

Another algorithm is given by the formulas:
\begin{equation}
\left\{
\begin{array}{lcl}
\bar u^{n+1}&=&u^n-\tau_1 K^T(v^n+(v^n-v^{n-1})/\theta)-\tau_1 A^T w^n\\
w^{n+1}&=&P_{\mu/\tau_1}\left(w^n+\frac{\tau_2}{\tau_1} A\bar u^{n+1}\right) \\
u^{n+1}&=&u^n-\tau_1 K^T(v^n+(v^n-v^{n-1})/\theta)-\tau_1 A^T w^{n+1}\\
v^{n+1}&=&(1-\theta)v^n+\theta T_{y,\epsilon}\left(v^n+ K u^{n+1}\right)
\end{array}
\right.\label{combialg}
\end{equation}
and it is proven in Section~\ref{proofsection} that it converges to the minimizer of problem (\ref{conproblem}) for $0<\theta\leq 1$, $\mu>0$, $\tau_1\|K\|^2<1$ and $\tau_2\|A\|^2<1$. We will refer to this algorithm as the generalized basis pursuit denoising algorithm (GBPDNA). The motivation for this name is given at the end of this Section.

In Figure~\ref{conspeedfigure} the two algorithms are compared numerically on the same problem as in Subsection~\ref{penalgsection} (i.e. same $K$, $A$ and $y$). Moreover, we chose two values of $\epsilon$ corresponding to the value of $\|Ku^\mathrm{ref}-y\|$ taken by the two reference minimizers of the simulation in the Subsection~\ref{penalgsection}. In other words, we chose $\epsilon$ so that the corresponding minimizers are identical to the minimizers of the numerical simulation in Subsection~\ref{penalgsection}. Because of this choice of $\epsilon$ it is possible to directly compare panels (a) and (b) of Figure~\ref{conspeedfigure} with panels (a) and (b) of Figure~\ref{penspeedfigure}.

We recompute the true (reference) minimizer $u^\mathrm{ref}$ with $10^5$ iterations of algorithm (\ref{combialg}) with $\tau_1=0.99/\|K\|^2$, $\tau_2=0.99/\|A\|^2$ and $\mu=\|K^Ty\|_\infty$. We verified that the resulting minimizer is equal (up to $0.1\%$) to the reference minimizer of subsection~\ref{penalgsection}. Then we compare the iterates of algorithms (\ref{PDHGmPalg}) and (\ref{combialg}) to these reference minimizers. We clearly notice that both algorithms are almost identical in their convergence behavior. The effect of an appropriate choice of the parameter $\mu$ in these algorithms is also visible in Figure~\ref{conspeedfigure}. We conclude that there is no significant difference in convergence speed between these two algorithms (\ref{PDHGmPalg}) and (\ref{combialg}) for the constrained minimization problem (\ref{conproblem}). In practice it is slightly easier to work with two conditions of type $\tau_1\|K\|^2<1$ and $\tau_2\|A\|^2<1$ than one condition of type $\|\tau_1 K^T K+\tau_2 A^T A\|<1$.

\begin{figure}
\centering\resizebox{\textwidth}{!}{\includegraphics{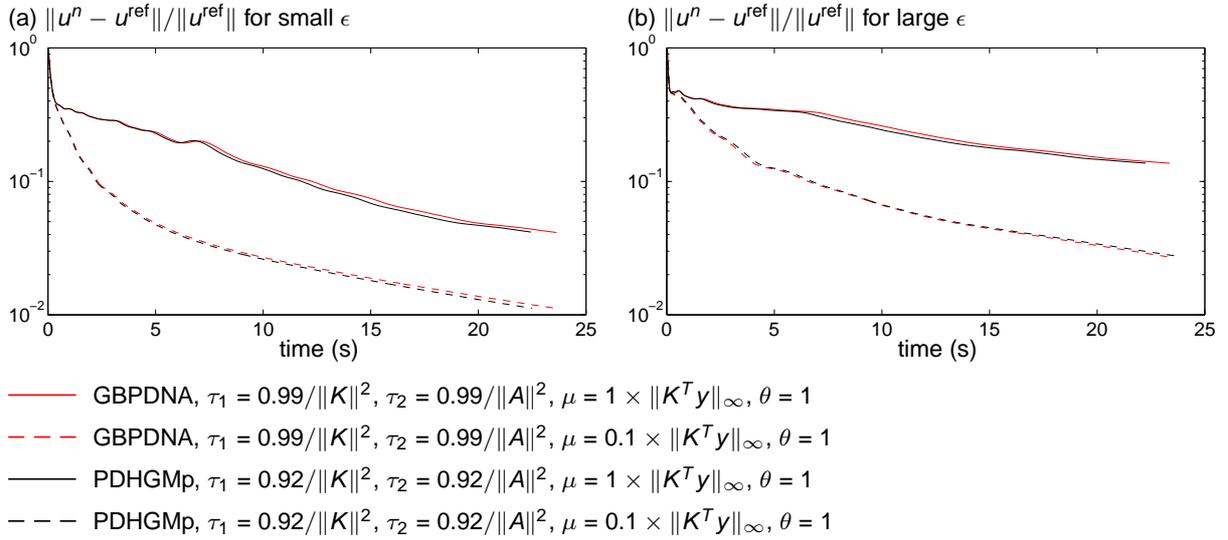}}
\caption{Convergence rate of two iterative minimization algorithms for the constrained problem (\ref{conproblem}). Pictured is the relative error to the true minimizer (obtained from algorithm (\ref{combialg}) and $10^5$ iterations). The figure on the left hand side is for a small value of $\epsilon$ and the one on the right is for a larger value of $\epsilon$. Each algorithm was run for $1000$ iterations.}\label{conspeedfigure}
\end{figure}

The constrained problem (\ref{conproblem}) reduces to the so-called `basis pursuit denoising' problem:
\begin{equation}
\arg\min_{\|Ku-y\|\leq\epsilon}\|u\|_1\label{bpproblems1}
\end{equation}
when
$A=1$, and to `basis pursuit' \citep{Chen.Donoho.ea1998}
\begin{equation}
\arg\min_{Ku=y}\|u\|_1\label{bpproblems2}
\end{equation}
when $A=1$ and $\epsilon=0$.
In the case of problem (\ref{bpproblems1}) the proposed algorithm (\ref{combialg}) reduces to:
\begin{equation}
\left\{
\begin{array}{lcl}
u^{n+1}&=&S_\mu(u^n-\tau_1 K^T(v^n+(v^n-v^{n-1})/\theta))\\
v^{n+1}&=&(1-\theta)v^n+\theta T_{y,\epsilon}\left(v^n+ K u^{n+1}\right)
\end{array}
\right. \label{bpdnalg}
\end{equation}
and in case of problem (\ref{bpproblems2}) the proposed algorithm (\ref{combialg}) reduces to
\begin{equation}
\left\{
\begin{array}{lcl}
u^{n+1}&=&S_\mu(u^n-\tau_1 K^T(v^n+(v^n-v^{n-1})/\theta))\\
v^{n+1}&=&(1-\theta)v^n+\theta (v^n+ K u^{n+1}-y).
\end{array}
\right.\label{bpalg}
\end{equation}
The last special case (\ref{bpalg}) is the same as algorithm 5.6 of \citep{Zhang.Burger.ea2011}.
For this reason we will call algorithm (\ref{combialg}) the generalized basis pursuit denoising algorithm (GBPDNA).

In algorithms (\ref{bpdnalg}) and (\ref{bpalg}) the step size parameter $\tau_1$ satisfies $\tau_1\|K\|^2<1$. As a result of the soft-thresholding, the $u^n$ are sparse in every step.

\section{Explicit formulas for the algorithms used in Section~\ref{reviewsection}}
\label{detailssection}

The iterative algorithms of Sections~\ref{penalgsection} and \ref{conalg} were compared in the framework of TV -penalized seismic recovery (equations (\ref{penproblem}) and (\ref{TVdef})). The same algorithms, with minor modifications, can also be used to solve the other three problems of Section~\ref{reviewsection} as well. In this section we give some explicit formulas for the expressions encountered in these penalties and algorithms. The details depend e.g. on the number of spatial dimensions (2D, or 3D), on the precise expression of the differencing matrix $A$ in functionals (\ref{penproblem}) and (\ref{conproblem}), etc.

For the TV regularized problem (\ref{TVdef}) we set $A=\mathrm{grad}$, which we can define as $\mathrm{grad}(u)=(\Delta_x u,\Delta_y u)$ and :
\begin{equation}
(\Delta_x u)_{i,j}=
\left\{
\begin{array}{lcl}
u_{i+1,j}-u_{i,j} & & i:1\ldots N-1\\
0 & & i=N
\end{array}\right.
\quad
(\Delta_y u)_{i,j}=
\left\{
\begin{array}{lcl}
u_{i,j+1}-u_{i,j} & & j:1\ldots N-1\\
0 & & j=N
\end{array}\right.
\label{graddef}
\end{equation}
($i,j:1\ldots N$) for $u\in \mathbb{R}^{N\times N}$.
So for a 2D model $u\in \mathbb{R}^{N\times N}$, the gradient field $\mathrm{grad}(u)$ will be in $\mathbb{R}^{N\times N\times 2}$.
The transpose of $A$ is then given by the formulas $A^T (w_x,w_y)=\Delta_x^T w_x+\Delta_y^T w_y$ with:
\begin{equation}
\begin{array}{l}
(\Delta_x^T w_x)_{i,j}=
\left\{
\begin{array}{lcl}
-w_{x,i,j}+w_{x,i-1,j} \qquad& & i:2\ldots N-1\\
-w_{x,i,j} & & i=1\\
w_{x,i-1,j} & & i=N\\
\end{array}\right.\\
\mathrm{and} \\
(\Delta_y^T w_y)_{i,j}=
\left\{
\begin{array}{lcl}
-w_{y,i,j}+w_{y,i,j-1} \qquad & & j:2\ldots N-1\\
-w_{y,i,j} & & j=1\\
w_{y,i,j-1} & & j=N\\
\end{array}\right.
\end{array}\label{divdef}
\end{equation}
($i,j:1\ldots N$) for a $w_x,w_y\in\mathbb{R}^{N\times N}$. With these definitions one has that $\langle Au,w\rangle=\langle u,A^T w\rangle$ for all $u\in \mathbb{R}^{N\times N}$ and all $w\in \mathbb{R}^{N\times N\times 2}$.

These operations are easy to code (e.g. in MATLAB) and the extension of the above formulas to 3D models ($u\in\mathbb{R}^{N\times N\times N}$) is straightforward. In the examples of Section~\ref{reviewsection} we used the cubed sphere parametrization of \cite{Ronchi.Iacono.ea1996}. The formulas for $\mathrm{grad}$ (or $\Delta_x$ and $\Delta_y$) and $\mathrm{grad}^T$ (or $\Delta_x^T$ and $\Delta_y^T$) are the same as in (\ref{graddef}) and (\ref{divdef}), except that other boundary conditions are used to ensure the correct behavior at the edges of the six (`square') faces that make up the parametrization of the sphere.

For minimizing the functional (\ref{TVdef}) the generalized iterative soft-thresholding algorithm (\ref{gista}) was used. The nonlinear operator $P_\lambda$ appearing in it, in this case becomes in accordance with formula (\ref{PSdef}):
\begin{equation}
P_\lambda(w_x,w_y)=
\left\{
\begin{array}{lcl}
\displaystyle \frac{\lambda}{\sqrt{w_x^2+w_y^2}}(w_x,w_y) &  \qquad &\sqrt{w_x^2+w_y^2}>\lambda\\
(w_x,w_y) & &\sqrt{w_x^2+w_y^2}\leq \lambda,
\end{array}
\right.
\end{equation}
for $(w_x,w_y)\in\mathbb{R}^2$. Here we have dropped the double subscript $i,j\in\{1\ldots N\}$ for clarity.

For the Huber TV regularization method, one uses the same expressions for $A=\mathrm{grad}$ and $A^T$ as in the TV case, but the operator $P_\lambda$ has to be replaced by:
\begin{equation}
P_{\lambda,\alpha}(w_x,w_y)=
\left\{
\begin{array}{lcl}
\displaystyle \frac{\lambda}{\sqrt{w_x^2+w_y^2}}(w_x,w_y) & \qquad & \sqrt{w_x^2+w_y^2}>\lambda+\alpha\\
\displaystyle\frac{\lambda}{\lambda+\alpha}(w_x,w_y) &\qquad & \sqrt{w_x^2+w_y^2}\leq \lambda+\alpha,
\end{array}
\right.\label{TVPlambdadef}
\end{equation}
again applied in every pixel.
With this minor modification the algorithms of Section~\ref{algsection} may also be applied to problem (\ref{HTVdef}). The convergence is still guaranteed under the same conditions as in Theorem~\ref{theorem} (see also note after proof in Section~\ref{proofsection}).

For the problem (\ref{Hessdef}), we have set
$Au=\mathrm{Hess}(u)$ with
\begin{equation}
\mathrm{Hess}(u)=\left(
\begin{array}{cc}
\Delta_x^2u & \Delta_x\Delta_y u\\
\Delta_y\Delta_xu & \Delta_y^2 u
\end{array}\right)\label{hessmatrix}
\end{equation}
for $u\in\mathbb{R}^{N\times N}$, i.e. $\mathrm{Hess}(u)\in\mathbb{R}^{N\times N \times 2\times 2}$. Its transpose $A^T$ is given by the formula:
\begin{equation}
\mathrm{Hess}^T(w)=(\Delta_x^T)^2 w_{11} +\Delta_y^T \Delta_x^T w_{12} +
\Delta_x^T \Delta_y^T w_{21} + (\Delta_y^T)^2 w_{22}
\end{equation}
acting on a $w\in\mathbb{R}^{N\times N\times 2\times 2}$, with $\Delta_x^T$ and $\Delta_y^T$ defined as in (\ref{divdef}).
The Frobenius norm of the local Hessian, as used in penalty (\ref{Hessdef}), is:
\begin{equation}
\left\|\left(
\begin{array}{cc}
H_{11} & H_{12}\\
H_{21} & H_{22}
\end{array}
\right)\right\|_F
=\sqrt{H_{11}^2+H_{12}^2+H_{21}^2+H_{22}^2}\label{frobeniusdef}
\end{equation}
for $H\in\mathbb{R}^{2\times 2}$. It is equal to $\sqrt{\sigma_1^2+\sigma_2^2}$ where $\sigma_1,\sigma_2$ are the singular values of $H$.
In formula (\ref{Hessdef}) the sum over all pixels of expression (\ref{frobeniusdef}) is taken, i.e.:
\begin{equation}
\lambda\|Ax\|_1=\lambda\sum_\mathrm{pixels}
\left\|\left(
\begin{array}{cc}
\Delta_x^2u & \Delta_x\Delta_y u\\
\Delta_y\Delta_xu & \Delta_y^2 u
\end{array}\right)\right\|_F.
\end{equation}
The advantage of using this matrix norm lies not only in the fact that it is easy to compute but also in the fact that the penalty becomes isotropic. In other words, a rotation of the coordinate axis, does not change this expression (the rotation $(x',y')=R(x,y)$ leads to $H'=R H R^T$ and therefore $\|H'\|_F=\|H\|_F$).
Other spectral matrix norms (i.e. norms that are based on the spectrum of $H$), besides the Frobenius norm, also lead to isotropic penalties. However, the Frobenius norm is particularly easy to use as it does not require an explicit singular value decomposition of each $H$ (in each pixel) to compute it. And, as we shall see, the associated projection $P_\lambda$ is also easy to compute.
Note that e.g. the penalty $\sum_{ij}|(\Delta u)_{ij}|$ is also isotropic. It leads to models where the laplacian $\Delta u$ of $u$ will be mostly zero, i.e. models that are piecewise harmonic (not piecewise linear).

In the case (\ref{Hessdef}) the expression for $P_\lambda$ used in algorithm (\ref{gista}) is
\begin{equation}
P_\lambda
\left(
\begin{array}{cc}
w_{11} & w_{12}\\
w_{21} & w_{22}
\end{array}
\right)
=
\left\{
\begin{array}{lcl}
\lambda\left(
\begin{array}{cc}
w_{11} & w_{12}\\
w_{21} & w_{22}
\end{array}
\right) /
\left\|
\left(
\begin{array}{cc}
w_{11} & w_{12}\\
w_{21} & w_{22}
\end{array}
\right)
\right\|_F
&  \qquad &\left\|
\left(
\begin{array}{cc}
w_{11} & w_{12}\\
w_{21} & w_{22}
\end{array}
\right)
\right\|_F>\lambda\\[4mm]
\left(
\begin{array}{cc}
w_{11} & w_{12}\\
w_{21} & w_{22}
\end{array}
\right)
 & &
\left\|
\left(
\begin{array}{cc}
w_{11} & w_{12}\\
w_{21} & w_{22}
\end{array}
\right)
\right\|_F\leq \lambda,
\end{array}
\right.\label{HessPlambdadef}
\end{equation}
for $(w_{11},w_{12};w_{21},w_{22})\in\mathbb{R}^{2\times 2}$. Again we have dropped the double subscript ${i,j}$ for clarity but it is understood that the above formula should be applied in every pixel. As already mentioned, this expression for $P_\lambda$ does not require a singular value decomposition of $H$ in every pixel.

In case of the total generalized variation penalty (\ref{GTVdef}), one has a functional of the form $\frac{1}{2}\|Ku-y\|^2+\lambda(\|\mathrm{grad}(u)-v\|_1+\alpha \|Dv\|_1)$. It can therefore be treated with a combination of the above formulas for $P_\lambda, \mathrm{grad},\Delta_x,\Delta_y$, etc. The matrix $A$ in algorithm (\ref{gista}) now takes the form $\left(
\begin{array}{cc}
\mathrm{grad} & -\mathrm{Id}\\
0 & \alpha D
\end{array}
\right)$
and acts on $(u,v)$ where $v$ is an auxiliary variable in $\mathbb{R}^{N\times N\times 2}$. The operator $D$ acting on $v$ is a derivative: $Dv=\left(
\begin{array}{cc}
\Delta_x v_x & \Delta_y v_x\\
\Delta_x v_y & \Delta_y v_y
\end{array}
\right)$. The explicit expression for penalty (\ref{GTVdef}) involves therefore:
\begin{equation}
\sum_\mathrm{pixels}\sqrt{(\Delta_x u-v_x)^2+(\Delta_y u-v_y)^2}+\alpha
\sqrt{(\Delta_x v_x)^2+(\Delta_y v_x)^2+(\Delta_x v_y)^2+(\Delta_y v_y)^2}.
\label{tgvexplicit}
\end{equation}
This penalty is denoted by $\neg\mathrm{symTGV}$ in \citep{Bredies.Kunisch.ea2010}.
The explicit expression for the nonlinear projection operator $P_\lambda$ used in the iterative algorithms now combines both expressions (\ref{TVPlambdadef}) (for the first line of $A$) and (\ref{HessPlambdadef}) (for the second line of $A$). In this case the norm $\|A\|$ depends on the parameter $\alpha$.

For the simulation of Section~\ref{reviewsection}, the algorithm (\ref{gista}) was implemented four times with the above definitions of $A$ and $P_\lambda$. For those cases, we chose $\tau_1=1.99/\|K\|^2$ and $\tau_2=0.99/\|A\|^2$ and $1000$ iterations were performed. $1000$ iterations correspond to about $20$ seconds of computer time on a single 2.66GHz CPU with 12GB of memory, running Matlab R2009a. The relative error with the input model, $\|x^\mathrm{output}-x^\mathrm{input}\|/\|x^\mathrm{input}\|$, is given in Table~\ref{results} for each of the six reconstructions. Two sparse wavelet reconstruction were also calculated in Section~\ref{reviewsection}. Here $1000$ iterations of the FISTA algorithm \citep{Beck.Teboulle2008} were used with appropriate wavelet transform $W$ (see formula (\ref{waveletpenalty}) and algorithm (\ref{ista}) with $K\rightarrow KW^{-1}$).

None of the six output models $u^\mathrm{output}$ is expected to be exactly equal to the input model $u^\mathrm{input}$ because of the lack of data, the noise on the data and the effect of the penalties on the reconstructions.

The computational complexity of the algorithms depends on the matrices $K$ and $A$ (or $W$). It was already mentioned in Section~\ref{algsection} that the iterative algorithms discussed in this paper use a single application of $K$, $K^T$, $A$ and $A^T$ in every step (except for algorithm (\ref{implbregman})). The matrix $K$ encodes the relationship between the model $u$ and data $y$. In case $K$ is a dense matrix, a single application of $K$ (or its transpose) requires $\mathcal{O}(mN)$ operations. Here $N$ is the number of components of $u$ and $m$ is the number of data. If the matrix $K$ is sparse (as is the case for the examples in this paper), or has structure (as e.g. in deconvolution) then this can be reduced significantly. If the matrix $A$ is chosen as a local differencing operator, then applying $A$ (or its transpose) requires $\mathcal{O}(N)$ operations. The same is true for a wavelet transform $W$. The convex projections $P_\lambda$ are also $\mathcal{O}(N)$ operations (as they are applied componentwise). The operator $T_{y,\epsilon}$ used in the constrained algorithms (\ref{PDHGmPalg}) and (\ref{combialg}) has computational complexity $\mathcal{O}(m)$.
The computing times mentioned in this paper do not serve as a basis for extrapolation to other situations; what is important here is that the application of such $A$ and $P_\lambda$ typically takes less time than applying $K$ and $K^T$. This is what makes the application of the edge-preserving penalties of Section~\ref{reviewsection} possible in practice.

\begin{table}\centering
\begin{tabular}{lllll}
Name & $\|u^\mathrm{output}-u^\mathrm{input}\|/\|u^\mathrm{input}\|$ & $\|Ku^\mathrm{output}-y\|/\|n\|$  \\ \hline
TV &  0.20643 & 1.0009\\
Huber-TV & 0.22519 & 1.0041 \\
TGV &  0.20843 & 1.0003 \\
HP & 0.20033 & 1.0080 \\
Haar & 0.39789 & 1.0034&\\
CDF 4-2 & 0.25248 & 1.0059&
\end{tabular}
\caption{Reconstruction results for the six reconstructions of Section~\ref{reviewsection}. All reconstructions fit the data equally well (third column), and have slightly different reconstruction error (second column).}\label{results}
\end{table}

\section{Proof of convergence of constrained algorithm (\ref{combialg})}
\label{proofsection}

Without loss of generality we set $\tau_1=\tau_2=1$ in algorithm (\ref{combialg}) and prove convergence for $\|K\|<1$ and $\|A\|<1$ (the step sizes can be introduced by scaling the matrices $K$, $A$ and the data $y$). The algorithm (\ref{combialg}) can thus  be written as:
\begin{equation}
\left\{
\begin{array}{lcl}
\bar u^{n+1}&=&u^n-K^T(v^n+Ku^n-z^n)-A^T w^n\\
w^{n+1}&=&P_\mu(w^n+ A\bar u^{n+1}) \\
u^{n+1}&=&u^n-K^T(v^n+Ku^n-z^n)-A^T w^{n+1}\\
z^{n+1}&=&Q_{y,\epsilon}(Ku^{n+1}+v^n)\\
v^{n+1}&=&v^n+\theta (Ku^{n+1}-z^{n+1}).
\end{array}
\right.\label{combinedfull}
\end{equation}
with $\|K\|<1$ and $\|A\|<1$.
Indeed, the variable $z^{n+1}$ can be eliminated from the last line of (\ref{combinedfull}) to yield the last line of (\ref{combialg}):
\begin{displaymath}
\begin{array}{lcl}
v^{n+1}&=&v^n+\theta \left(Ku^{n+1}-z^{n+1}\right)\\
&=& (1-\theta)v^n+\theta \left(Ku^{n+1}+v^n-Q_{y,\epsilon}\left(Ku^{n+1}+v^n\right)\right)\\
&=& (1-\theta)v^n+\theta T_{y,\epsilon}\left(Ku^{n+1}+v^n\right)
\end{array}
\end{displaymath}
where we used $T_{y,\epsilon}(a)=a-Q_{y,\epsilon}(a)$.
Substitution of the last line of (\ref{combinedfull}) (with $n\rightarrow n-1$) in the first line of algorithm (\ref{combinedfull}) then yields the first line of algorithm (\ref{combialg}).

The minimizer of problem (\ref{conproblem}) is determined by its variational equations. These are derived as follows. Introducing a positive parameter $\mu$, we write problem (\ref{conproblem}) as $\min_u \mu\|Au\|_1+I(Ku)$, where $I$ is the indicator function of the $\ell_2$ ball of radius $\epsilon$ around $y$: $I(Ku)=0$ for $\|Ku-y\|\leq\epsilon$ and $I(Ku)=\infty$ for $\|Ku-y\|>\epsilon$. The variational equations of the constrained problem (\ref{conproblem}) are therefore
\begin{equation}
A^T w+K^T v=0,
\end{equation}
with $w$ an element of the subdifferential of $\mu\|\cdot\|_1$ at $Au$ and $v$ an element of the subdifferential of $I$ at $Ku$.

The subdifferential $w$ of $\mu\|Ax\|_1$ satisfies $w_i=\mu\,(Au)_i/|(Au)_i|$ for $(Au)_i\neq 0$ and $|w_i|\leq\mu$ for $(Au)_i=0$. This means that $(Au)_i=S_\mu(w_i+(Au)_i)$ or equivalently, using (\ref{PSdef}), that $w_i=P_\mu(w_i+(Au)_i)$, which we write as $w=P_\mu(w+Au)$.

Similarly one shows that the subdifferential of $v$ of $I$ at $Ku$ is characterized by the relation $Ku=Q_{y,\epsilon}(Ku+v)$.
The variational equations that determine the minimizer of problem (\ref{conproblem}) can therefore be written with an auxiliary variable $z=Ku$ as:
\begin{equation}
\left\{
\begin{array}{lcl}
w&=&P_\mu\left(w+ A u\right) \\
u&=&u-K^T(v+Ku-z)-A^T w\\
z&=&Q_{y,\epsilon}\left(Ku+v\right)\\
v&=&v+\theta \left(Ku-z\right),
\end{array}
\right.\label{fixedpoint}
\end{equation}
which correspond to the fixed point equations of iteration (\ref{combinedfull}) when $\theta\neq 0$.

\begin{lemma}\label{lemma} Let $P_C$ be a projection on a non-empty closed convex set $C\subseteq\mathbb{R}^N$. Let $u^+,u^-,\Delta\in\mathbb{R}^N$. If $u^+=P_C(u^-+\Delta)$, then
\begin{equation}
\|u^+-u\|^2\leq\|u^--u\|^2-\|u^+-u^-\|^2 -2\langle u-u^+,\Delta\rangle\label{lemmaeq}
\end{equation}
for all $u\in C$.
\end{lemma}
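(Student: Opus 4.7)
The plan is to derive the inequality directly from the standard variational (obtuse-angle) characterization of the metric projection onto a closed convex set, combined with a single polarization identity. There is no real difficulty here: the lemma is essentially the ``firmly nonexpansive'' property of the projection applied to the shifted point $u^-+\Delta$, with the $\Delta$-cross-term kept separate so that it can later be matched against problem-specific data terms when the lemma is invoked in the convergence proof of algorithm (\ref{combinedfull}).

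First I would invoke the defining inequality of $P_C$: because $u^+=P_C(u^-+\Delta)$ lies in $C$ and minimises $\|\cdot-(u^-+\Delta)\|$ over $C$, every $u\in C$ satisfies
\begin{equation}
\langle u-u^+,\,(u^-+\Delta)-u^+\rangle\leq 0.
\end{equation}
Splitting off the $\Delta$-term and rearranging gives the key bound
\begin{equation}
2\langle u-u^+,\,u^--u^+\rangle\;\leq\;-2\langle u-u^+,\,\Delta\rangle.\label{plan:proj}
\end{equation}

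Next I would apply a straightforward polarization identity: writing $u^--u=(u^--u^+)+(u^+-u)$ and expanding,
\begin{equation}
\|u^--u\|^2=\|u^--u^+\|^2+2\langle u^--u^+,\,u^+-u\rangle+\|u^+-u\|^2,
\end{equation}
which rearranges to
\begin{equation}
\|u^+-u\|^2=\|u^--u\|^2-\|u^+-u^-\|^2+2\langle u-u^+,\,u^--u^+\rangle.\label{plan:polar}
\end{equation}
Finally I would substitute the projection inequality (\ref{plan:proj}) into the last term of (\ref{plan:polar}) to obtain exactly (\ref{lemmaeq}).

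The only thing to be careful about is sign bookkeeping in the inner products (the interplay between $u^--u^+$ and $u^+-u^-$, and between $u-u^+$ and $u^+-u$). No deeper ingredient is needed; in particular, no smoothness or additional assumption on $C$ beyond non-empty, closed, and convex is used, which is why the same statement will later apply both to the subdifferential characterisation of $\mu\|\cdot\|_1$ via $P_\mu$ and to the indicator of the $\ell_2$-ball via $Q_{y,\epsilon}$ in the convergence analysis of (\ref{combinedfull}).
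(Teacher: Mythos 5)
Your proof is correct and follows essentially the same route as the paper: the variational inequality $\langle u-u^+,(u^-+\Delta)-u^+\rangle\leq 0$ combined with the polarization identity $\langle u-u^+,u^--u^+\rangle=\bigl(\|u-u^+\|^2+\|u^--u^+\|^2-\|u-u^-\|^2\bigr)/2$. The sign bookkeeping in your rearrangement checks out, so nothing further is needed.
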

\begin{proof}
Because $P_C$ is the projection on the convex set $C$ we have:
\begin{displaymath}
\langle u-P(u'),u'-P_C(u')\rangle\leq 0
\end{displaymath}
for all $u\in C$ and all $u'$. Setting $u'=u^-+\Delta$ and $P_C(u')=u^+$ in this inequality yields
\begin{displaymath}
\langle u-u^+,u^-+\Delta-u^+\rangle\leq 0.
\end{displaymath}
As $\langle u-u^+,u^--u^+\rangle=\left(\|u-u^+\|^2+\|u^--u^+\|^2-\|u-u^-\|^2\right)/2$ we find the relation (\ref{lemmaeq}).
\end{proof}

The proof of the following theorem is a combination of the proof of theorem~2 in \citep{Loris.Verhoeven2011} and theorem~4.2 of \citep{Zhang.Burger.ea2011}.
\begin{theorem}\label{theorem} In a finite dimensional setting, and when $\|K\|<1$, $\|A\|<1$ and $0<\theta\leq 1$, the iteration (\ref{combinedfull}) converges to a fixed point and provides a minimizer of problem (\ref{conproblem}) .
\end{theorem}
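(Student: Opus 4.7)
The plan is to adapt the two proofs cited immediately before the theorem (Loris-Verhoeven, theorem 2, for the primal side involving $u$ and $w$, and Zhang-Burger, theorem 4.2, for the Lagrange multiplier $v$). After the stated normalization $\tau_1=\tau_2=1$, I fix a solution $(u^*,w^*,v^*,z^*)$ of the fixed-point system (\ref{fixedpoint}), which exists because (\ref{conproblem}) has a minimizer and its Karush-Kuhn-Tucker subgradients $w^*,v^*$ supply $z^*=Ku^*$. The candidate Lyapunov quantity is
$$
\Phi_n=\|u^n-u^*\|^2+\|w^n-w^*\|^2+\frac{1}{\theta}\|v^n-v^*\|^2,
$$
and the goal is a recursion $\Phi_{n+1}\leq\Phi_n-R_n$ with $R_n\geq 0$.

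Four ingredients feed into this recursion. First, Lemma~\ref{lemma} applied to $w^{n+1}=P_\mu(w^n+A\bar u^{n+1})$ (projection on the ball of radius $\mu$) with comparison point $w^*$ yields
$$
\|w^{n+1}-w^*\|^2\leq\|w^n-w^*\|^2-\|w^{n+1}-w^n\|^2-2\langle w^*-w^{n+1},A\bar u^{n+1}\rangle.
$$
Second, Lemma~\ref{lemma} applied to $z^{n+1}=Q_{y,\epsilon}(Ku^{n+1}+v^n)$, together with $z^*=Q_{y,\epsilon}(Ku^*+v^*)$ from (\ref{fixedpoint}), produces the term $-2\langle z^*-z^{n+1},v^n-v^*\rangle$ plus a residual controlled by $\|K(u^{n+1}-u^*)\|^2-\|z^{n+1}-Ku^{n+1}\|^2$. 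Third, a direct expansion of
$$
u^{n+1}-u^*=(u^n-u^*)-K^T[(v^n-v^*)+K(u^n-u^*)-(z^n-z^*)]-A^T(w^{n+1}-w^*)
$$
produces the cross terms involving $K^T(v^n-v^*)$ and $A^T(w^{n+1}-w^*)$. Fourth, a direct expansion of
$$
v^{n+1}-v^*=(v^n-v^*)+\theta(K(u^{n+1}-u^*)-(z^{n+1}-z^*))
$$
produces, after dividing by $\theta$, exactly the cross terms $2\langle v^n-v^*,K(u^{n+1}-u^*)\rangle$ and $-2\langle v^n-v^*,z^{n+1}-z^*\rangle$ that are needed to neutralize the contributions from ingredients two and three. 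The leapfrog identity $u^{n+1}=\bar u^{n+1}-A^T(w^{n+1}-w^n)$, already exploited in the proof of algorithm (\ref{gista}), allows the inner product $\langle w^*-w^{n+1},A\bar u^{n+1}\rangle$ from the first ingredient to be rewritten so that its $A$-cross term cancels the $A^T(w^{n+1}-w^*)$ contribution from the third ingredient. After all cancellations, the residual quadratic terms are of the form $\|K\xi\|^2-\|\xi\|^2$ and $\|A\eta\|^2-\|\eta\|^2$, which are non-positive thanks to $\|K\|<1$ and $\|A\|<1$ and therefore fold into $-R_n$.

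Once $\Phi_{n+1}\leq\Phi_n-R_n$ is in place, the sequence $\Phi_n$ is non-increasing and non-negative, hence convergent, so $R_n\to 0$. Unpacking the summands of $R_n$ gives $w^{n+1}-w^n\to 0$, $v^{n+1}-v^n\to 0$ and $Ku^{n+1}-z^{n+1}\to 0$. In finite dimension the bounded sequence $(u^n,w^n,v^n,z^n)$ admits a convergent subsequence whose limit, by continuity of the maps in (\ref{combinedfull}), satisfies (\ref{fixedpoint}); reapplying the Lyapunov estimate with this limit in place of $(u^*,w^*,v^*,z^*)$ is a standard Opial-type argument that upgrades subsequential to full convergence and identifies the limit as a minimizer of (\ref{conproblem}). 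I expect the main obstacle to be the algebraic bookkeeping in the cancellation step, specifically verifying that the weight $1/\theta$ on $\|v^n-v^*\|^2$ is exactly what is needed to cancel the $v$-cross terms uniformly for every $\theta\in(0,1]$, and that the quadratic remainders in $R_n$ can be controlled without any coupling between the spectra of $K$ and $A$ beyond the separate bounds $\|K\|<1$ and $\|A\|<1$.
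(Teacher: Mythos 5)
Your plan is essentially the paper's own proof: Lemma~\ref{lemma} applied to each update line of (\ref{combinedfull}) \emph{and} to the corresponding line of the fixed-point system (\ref{fixedpoint}), summation so that all cross terms cancel (using the leapfrog identity $u^{n+1}=\bar u^{n+1}-A^T(w^{n+1}-w^n)$ and the weight $1/\theta$ on the $v$-block), followed by a subsequence/Opial-type argument in finite dimension. The one correction needed in your bookkeeping is that the quantity that actually decreases is not your $\Phi_n$ but
$\|u^n-u^*\|^2-\|K(u^n-u^*)\|^2+\|w^n-w^*\|^2-\|A^T(w^n-w^*)\|^2+\theta^{-1}\|v^n-v^*\|^2+\|z^n-z^*\|^2$:
the terms $\|K(u^{n}-u^*)\|^2$ and $\|A^T(w^{n}-w^*)\|^2$ show up as telescoping pairs across consecutive iterates, not as sign-definite remainders of the form $\|K\xi\|^2-\|\xi\|^2$ with $\xi=u^{n+1}-u^n$, so they must be absorbed into the Lyapunov function itself (the paper does this via matrices $L,B$ with $L^TL=1-K^TK$ and $B^TB=1-AA^T$, which is where $\|K\|<1$ and $\|A\|<1$ enter separately), and the $z$-block must be carried in the Lyapunov function as well rather than omitted.
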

\begin{proof}
It was already remarked  that the fixed point equations (\ref{fixedpoint}) correspond to the variational equations for problem (\ref{conproblem}). The problem (\ref{conproblem}) has a solution and therefore a solution $(\hat u,\hat w,\hat v,\hat z)$ to the equations (\ref{fixedpoint}) exists.

Taking into account line 2 of algorithm (\ref{combinedfull}) and lemma~\ref{lemma} with $u^+=w^{n+1}$, $u^-=w^n$, $\Delta=A\bar u^{n+1}$, $u=\hat w$, and $P_C=P_\mu$, we find that:
\begin{displaymath}
\|w^{n+1}-\hat w\|^2\leq\|w^{n}-\hat w\|^2-\|w^{n+1}-w^n\|^2-2\langle \hat w-w^{n+1},A\left(u^{n+1}-A^T(w^n-w^{n+1})\right)\rangle
\end{displaymath}
where we have also used $\bar u^{n+1}=u^{n+1}-A^T(w^n-w^{n+1})$.
Similarly one finds from the first line of (\ref{fixedpoint}) and application of lemma~\ref{lemma} (with $u^+=\hat w$, $u^-=\hat w$, $\Delta= A\hat u$, $u=w^{n+1}$ and $P_C=P_\mu$) that:
\begin{displaymath}
\|\hat w-w^{n+1}\|^2\leq\|\hat w-w^{n+1}\|^2-\|\hat w-\hat w\|^2
 -2\langle w^{n+1}- \hat w, A\hat u\rangle.
\end{displaymath}
Together these two inequalities yield:
\begin{equation}
\begin{array}{lcl}
\|w^{n+1}-\hat w\|^2&\leq&\|w^{n}-\hat w\|^2-\|w^{n+1}-w^n\|^2\\
&&\qquad  -2\langle \hat w-w^{n+1},A\left((u^{n+1}-\hat u)-A^T(w^n-w^{n+1})\right)\rangle\\
&=&\|w^{n}-\hat w\|^2-\|w^{n+1}-w^n\|^2+\|A^T(w^{n+1}-\hat w)\|^2\\
&&\qquad +\|A^T(w^{n+1}-w^n)\|^2 -\|A^T(w^{n}-\hat w)\|^2\\
&&\qquad  -2\langle \hat w-w^{n+1},A(u^{n+1}-\hat u)\rangle.
\end{array}\label{tempw}
\end{equation}
In the same way, the third line of algorithm (\ref{combinedfull}) and lemma~\ref{lemma} (with $u^+=u^{n+1}$, $u^-=u^n$, $\Delta=-K^T(v^n+Ku^n-z^n)-A^T w^{n+1}$, $u=\hat u$, $P_C=\mathrm{Id}$) implies
\begin{displaymath}
\|u^{n+1}-\hat u\|^2\leq\|u^{n}-\hat u\|^2-\|u^{n+1}-u^n\|^2-2\langle \hat u-u^{n+1}, -K^T(v^n+Ku^n-z^n)-A^T w^{n+1}\rangle
\end{displaymath}
and the second line of (\ref{fixedpoint}) with lemma~\ref{lemma} (with $u^+=\hat u$, $u^-=\hat u$, $\Delta=-K^T(\hat v+K\hat u-\hat z)-A^T \hat w$, $u=u^{n+1}$, $P_C=\mathrm{Id}$) implies:
\begin{displaymath}
\|\hat u-u^{n+1}\|^2\leq\|\hat u-u^{n+1}\|^2-\|\hat u-\hat u\|^2
 -2\langle u^{n+1}- \hat u, -K^T(\hat v+K\hat u-\hat z)-A^T \hat w\rangle,
\end{displaymath}
such that together they yield:
\begin{equation}
\begin{array}{lcl}
\|u^{n+1}-\hat u\|^2&\leq&\|u^{n}-\hat u\|^2-\|u^{n+1}-u^n\|^2\\
&&-2\langle \hat u-u^{n+1}, -K^T\left(v^n-\hat v+K(u^n-\hat u)-(z^n-\hat z)\right)-A^T (w^{n+1}-\hat w)\rangle\\
&\leq&\|u^{n}-\hat u\|^2-\|u^{n+1}-u^n\|^2\\
&&\qquad+2\langle K(\hat u-u^{n+1}),K(u^n-\hat u)\rangle-2\langle K(\hat u-u^{n+1}),z^n-\hat z\rangle\\
&&\qquad+2\langle \hat u-u^{n+1}, K^T(v^n-\hat v)+A^T (w^{n+1}-\hat w)\rangle\\
&=&\|u^{n}-\hat u\|^2 -\|u^{n+1}-u^n\|^2 \\
&& \qquad-\|K(u^{n+1}-\hat u)\|^2+\|K(u^{n+1}-u^n)\|^2-\|K(u^{n}-\hat u)\|^2\\
&& \qquad+\|K(u^{n+1}-\hat u)\|^2-\|\hat z-z^n-K(\hat u-u^{n+1})\|^2+\|z^n-\hat z\|^2\\
&& \qquad-2\langle \hat u-u^{n+1},-K^T\left(v^n-\hat v\right)-A^T (w^{n+1}-\hat w)\rangle\\
&=&\|u^{n}-\hat u\|^2 -\|u^{n+1}-u^n\|^2+\|K(u^{n+1}-u^n)\|^2 \\
&& \qquad-\|K(u^{n}-\hat u)\|^2-\|\hat z-z^n-K(\hat u-u^{n+1})\|^2+\|z^n-\hat z\|^2\\
&& \qquad-2\langle \hat u-u^{n+1},-K^T\left(v^n-\hat v\right)-A^T (w^{n+1}-\hat w)\rangle.
\end{array}\label{tempu}
\end{equation}
And from the fourth line of algorithm (\ref{combinedfull}) and lemma~\ref{lemma} (with $u^+=z^{n+1}$, $u^-=0$, $\Delta=Ku^{n+1}+v^n$, $u=\hat z$ and $P_C=Q_{y,\epsilon}$) one finds:
\begin{displaymath}
\|z^{n+1}-\hat z\|^2\leq\|0-\hat z\|^2-\|z^{n+1}-0\|^2
-2\langle \hat z-z^{n+1},Ku^{n+1}+v^n\rangle.
\end{displaymath}
Similarly, from the third line of (\ref{fixedpoint}) and  lemma~\ref{lemma} with $u^+=\hat z$, $u^-=0$, $\Delta=K\hat u+\hat v$, $u=z^{n+1}$ and $P_C=Q_{y,\epsilon}$ one has:
\begin{displaymath}
\|z^{n+1}-\hat z\|^2\leq\|0-z^{n+1}\|^2-\|\hat z-0\|^2
-2\langle z^{n+1}-\hat z,K\hat u+\hat v\rangle
\end{displaymath}
which together yield:
\begin{displaymath}
2\|z^{n+1}-\hat z\|^2\leq 2\langle z^{n+1}-\hat z,K(u^{n+1}-\hat u)+v^n-\hat v\rangle
\end{displaymath}
or:
\begin{equation}
\begin{array}{lcl}
0&\leq& 2\langle z^{n+1}-\hat z,-(z^{n+1}-\hat z)+K(u^{n+1}-\hat u)+v^n-\hat v\rangle\\
&=& -\|z^{n+1}-\hat z\|^2-\|\hat z-z^{n+1}+K(u^{n+1}-\hat u)\|^2+\|K(u^{n+1}-\hat u)\|^2\\
&& +2\langle z^{n+1}-\hat z,v^n-\hat v\rangle.\\
&=& -\|z^{n+1}-\hat z\|^2-\theta^{-2}\|v^{n+1}-v^n\|^2+\|K(u^{n+1}-\hat u)\|^2\\
&& +2\langle z^{n+1}-\hat z,v^n-\hat v\rangle,\\ \label{tempz}
\end{array}
\end{equation}
where we have used the last line of (\ref{combinedfull}) and of (\ref{fixedpoint}).

Finally from the last line of (\ref{combinedfull}) and from lemma~\ref{lemma} (with $u^+=v^{n+1}$, $u^-=v^n$, $\Delta=\theta(Ku^{n+1}-z^{n+1})$, $u=\hat v$ and $P_C=\mathrm{Id}$) it follows that:
\begin{displaymath}
\|v^{n+1}-\hat v\|^2\leq\|v^{n}-\hat v\|^2-\|v^{n+1}-v^n\|^2-2\theta\langle \hat v- v^{n+1}, Ku^{n+1}-z^{n+1} \rangle.
\end{displaymath}
In the same way it follows from the last line of the fixed-point equation (\ref{fixedpoint}) and lemma~\ref{lemma} (with $u^+=\hat v$, $u^-=\hat v$, $\Delta=\theta(K\hat u-\hat z)$, $u=v^{n+1}$ and $P_C=Id$) that:
\begin{displaymath}
\|v^{n+1}-\hat v\|^2\leq\|v^{n+1}-\hat v\|^2-\|\hat v-\hat v\|^2-2\theta\langle v^{n+1}-\hat v, K\hat u-\hat z \rangle
\end{displaymath}
and together the last two expressions yield:
\begin{equation}
\begin{array}{lcl}
\theta^{-1}\|v^{n+1}-\hat v\|^2&=& \theta^{-1}\|v^{n}-\hat v\|^2 -\theta^{-1}\|v^{n+1}-v^n\|^2\\
&&\qquad+2\langle v^{n+1}-\hat v, K(u^{n+1}-\hat u)-(z^{n+1}-\hat z) \rangle\\
&=& \theta^{-1}\|v^{n}-\hat v\|^2 -\theta^{-1}\|v^{n+1}-v^n\|^2\\
&&\qquad +2\langle v^{n+1}- v^n, K(u^{n+1}-\hat u)-(z^{n+1}-\hat z) \rangle\\
&&\qquad+2\langle v^{n}- \hat v, K(u^{n+1}-\hat u)-(z^{n+1}-\hat z) \rangle\\
&\stackrel{\mathrm{last\ lines\ of\ (\ref{combinedfull})\ and\ (\ref{fixedpoint})}}{=}& \theta^{-1}\|v^{n}-\hat v\|^2 +\theta^{-1}\|v^{n+1}-v^n\|^2\\
&&\qquad+2\langle v^{n}- \hat v, K(u^{n+1}-\hat u)-(z^{n+1}-\hat z) \rangle.
\end{array}\label{tempv}
\end{equation}

Adding inequalities (\ref{tempw}), (\ref{tempu}), (\ref{tempz}) and (\ref{tempv}) together yields:
\begin{equation}
\begin{array}{l}
\|u^{n+1}-\hat u\|^2+\|w^{n+1}-\hat w\|^2+\theta^{-1}\|v^{n+1}-\hat v\|^2 \leq
\|u^{n}-\hat u\|^2 -\|u^{n+1}-u^n\|^2 \\
\qquad +\|K(u^{n+1}-\hat u)\|^2  +\|K(u^{n+1}-u^n)\|^2-\|K(u^{n}-\hat u)\|^2\\
\qquad -\|\hat z-z^n-K(\hat u-u^{n+1})\|^2+\|z^n-\hat z\|^2\\
\qquad+\|w^{n}-\hat w\|^2-\|w^{n+1}-w^n\|^2+\|A^T(w^{n+1}-\hat w)\|^2\\
\qquad+\|A^T(w^{n+1}-w^n)\|^2 -\|A^T(w^{n}-\hat w)\|^2\\
\qquad -\|z^{n+1}-\hat z\|^2-\theta^{-2}\|v^{n+1}-v^n\|^2+\theta^{-1}\|v^{n}-\hat v\|^2+\theta^{-1}\|v^{n+1}-v^n\|^2
\end{array}
\end{equation}
as all remaining inner products cancel.
As $\|K\|<1$ and $\|A\|<1$ we can introduce real regular square matrices $L$ and $B$ such that $L^TL=1-K^T K$ and $B^TB=1-AA^T$. The last inequality becomes:
\begin{equation}
\begin{array}{l}
\|L(u^{n+1}-\hat u)\|^2+\|B(w^{n+1}-\hat w)\|^2+\theta^{-1}\|v^{n+1}-\hat v\|^2+\|z^{n+1}-\hat z\|^2 \\
\qquad\quad\leq\|L(u^{n}-\hat u)\|^2+\|B(w^{n}-\hat w)\|^2+\theta^{-1}\|v^{n}-\hat v\|^2+\|z^n-\hat z\|^2 \\
\qquad\qquad-\|L(u^{n+1}-u^n)\|^2-\|B(w^{n+1}-w^n)\|^2 -\|\hat z-z^n-K(\hat u-u^{n+1})\|^2
\end{array}\label{temp1}
\end{equation}
where we have used that $-\theta^{-2}\|v^{n+1}-v^n\|^2+\theta^{-1}\|v^{n+1}-v^n\|^2\leq0$ (for $0<\theta\leq1$).

It follows from inequality (\ref{temp1}) that $(u^n,w^n,v^n,z^n)_n$ is a bounded sequence and that a convergent subsequence $(u^{n_j},w^{n_j},v^{n_j},z^{n_j})_j$ exists. We call the limit of this subsequence $(u^\dagger,w^\dagger,v^\dagger,z^\dagger)$.

Summing inequality (\ref{temp1}) from $M$ to $N-1$ yields:
\begin{equation}
\begin{array}{l}
\|L(u^{N}-\hat u)\|^2+\|B(w^{N}-\hat w)\|^2+\theta^{-1}\|v^{N}-\hat v\|^2+\|z^{N}-\hat z\|^2\\
\quad\leq\|L(u^{M}-\hat u)\|^2+\|B(w^{M}-\hat w)\|^2+\theta^{-1}\|v^{M}-\hat v\|^2+\|z^M-\hat z\|^2 \\ \qquad-\sum_{n=M}^{N-1}\left(\|L(u^{n+1}-u^n)\|^2+\|B(w^{n+1}-w^n)\|^2 +\|\hat z-z^n-K(\hat u-u^{n+1})\|^2\right).
\end{array}\label{temp2}
\end{equation}
It follows that the sum in the right hand side of this expression bounded (independent of $N$) and therefore that $\|L(u^{n+1}-u^n)\|$, $\|B(w^{n+1}-w^n)\|$ and $\|\hat z-z^n-K(\hat u-u^{n+1})\|$ tend to zero when $n$ tends to infinity. As $B$ and $L$ are regular, this implies that $\|u^{n+1}-u^n\|$ and $\|w^{n+1}-w^n\|$ tend to zero for large $n$. Then also $\|\hat z-z^n-K(\hat u-u^{n})\|$ tends to zero for large $n$. The iteration (\ref{combinedfull}) and relations (\ref{fixedpoint}) imply that $v^{n+1}=v^n+\theta (K(u^{n+1}-\hat u)-(z^{n+1}-\hat z))$, and we therefore find that $\|v^{n+1}-v^n\|\stackrel{n\rightarrow\infty}{\longrightarrow}0$. It follows also from the iteration (\ref{combinedfull}) and the previous remarks that:
\begin{displaymath}
\begin{array}{lcl}
\|z^{n+1}-z^n\|&=&\|Q_{y,\epsilon}(Ku^{n+1}+v^n)-Q_{y,\epsilon}(Ku^{n}+v^{n-1})\|\\
&\leq& \|(Ku^{n+1}+v^n)-(Ku^{n}+v^{n-1})\|\\
&\leq& \|K(u^{n+1}-u^n)\|+\|v^n-v^{n-1}\|\stackrel{n\rightarrow\infty}{\longrightarrow}0.
\end{array}
\end{displaymath}
We can therefore conclude that $(u^{n_j+1},w^{n_j+1},v^{n_j+1},z^{n_j+1}) \stackrel{j\rightarrow\infty}{\longrightarrow}(u^\dagger,w^\dagger,v^\dagger,z^\dagger)$ as well. This implies then that $(u^\dagger,w^\dagger,v^\dagger,z^\dagger)$ is a fixed point of the iteration (\ref{combinedfull}) and satisfies the equations (\ref{fixedpoint}).

We now choose $(\hat u,\hat w,\hat v,\hat z)=(u^\dagger,w^\dagger,v^\dagger,z^\dagger)$ in inequality (\ref{temp2}) and find that:
\begin{displaymath}
\begin{array}{l}
\|L(u^{N}- u^\dagger)\|^2+\|B(w^{N}- w^\dagger)\|^2+ \theta^{-1}\|v^{N}- v^\dagger\|^2+\|z^{N}-z^\dagger\|^2\\
\qquad\leq\|L(u^{M}-u^\dagger)\|^2+\|B(w^{M}-w^\dagger)\|^2+\theta^{-1}\|v^{M}-v^\dagger\|^2 +\|z^M-z^\dagger\|^2
\end{array}
\end{displaymath}
for $N>M$.
As there is a convergent subsequence, the right hand side can be made as small as desired by choice of $M$, and it follows that the entire sequence $(u^{n},w^{n},v^{n},z^{n})$ tends to the fixed point $(u^\dagger,w^\dagger,v^\dagger,z^\dagger)$ for large $n$.
\end{proof}

One can adapt the algorithm (\ref{combialg}) (or (\ref{combinedfull})) and the proof to suit the problem $\min_x H(Ax)+J(Kx)$ (with $H$ and $J$ two convex functions). In this case the projection operator $P_\mu$ must be replaced by the proximity operator $\mathrm{prox}_{H^\ast}$ and the projection operator $Q_{y,\epsilon}$ must be replaced by the proximity operator of $J$: $\mathrm{prox}_{J}$. As long as these (non-linear) operators have simple expressions, the above algorithms are useable in those cases as well. See \citep{Combettes.Pesquet2011} for a general review of proximity operators and \citep{Loris.Verhoeven2011} on how proximity operators were used for a generalization of  problem (\ref{penproblem}) and algorithm (\ref{gista}).

\section{Conclusions}

A set of simple iterative algorithms for the minimization of a penalized least squares functional (\ref{penproblem}) was presented and their convergence speed was compared numerically in a seismic tomography context ($K\neq \mathrm{Id}$). For the examples considered, the comparison shows that all of these algorithm can produce the minimizer of the functional to within $1$ to $10\%$ after about $1000$ iterations. The `best' algorithm depends on the step sizes, the penalty parameter $\lambda$ and on the metric chosen (functional or distance to minimizer). They can therefore all be used in practice for solving these kind of problems as they appear in seismic tomography.

The four algorithms for minimizing the penalized functional (\ref{penproblem}) are not new. Here they  are presented in a uniform notation making comparison and implementation easier. We also presented and compared two iterative algorithms for the (equivalent) constrained formulation (\ref{conproblem}) of this problem. One of these algorithms is new and a proof of convergence is included.

The advantage of the constrained formulation (\ref{conproblem}) over the penalized formulation (\ref{penproblem}) is that often the penalty parameter $\lambda$ has to be determined by trial and error so as to fit the data to the desired level $\|K\hat u(\lambda)-y\|\leq\epsilon$. This is done automatically in formulation (\ref{conproblem}). We have listed simple iterative algorithms for both approaches. However, if the operator $A$ is invertible (and the inverse is readily available), then it can be more advantageous to make a change of variables in the \emph{penalized} formulation and to use an accelerated algorithm (see discussion at end of Section~\ref{penalgsection}).

For the readers' convenience we wrote down the explicit forms of several non-smooth convex penalties that can be used for edge-preserving regularization of a seismic imaging problem (total variation penalties and various generalizations) and that can be solved with these algorithms. We also wrote explicit formulas for the convex projection operators that are used by the iterative algorithms in these cases.

We discussed the qualitative properties of these reconstructions on a synthetic seismic tomography example. Particular emphasis was given to the role of sparse local differences. For the case of image denoising $K=\mathrm{Id}$ with TV-like penalties one may refer to \citep{Strong2003} and \citep{Setzer2011}. Still in the case of denoising, other penalties on local differences (other than $\ell_1$-norm type, including non-convex ones), with the goal of maintaining edges in the reconstruction, are compared numerically in \citep{Lukic2011}.

\section*{Acknowledgements}
The authors would like to acknowledge fruitful discussions with F. Simons, K. Sigloch and G. Nolet.
I.L. is a research associate of the Fonds de la recherche Scientifique-FNRS.
Part of this research was started while the authors were at CAMP department of the Vrije Universiteit Brussel and was supported by Vrije Universiteit Brussel grant GOA-062 and by the Fonds voor Wetenschappelijk Onderzoek-Vlaanderen grant G.0564.09N.

\bibliography{geomathTV}{}

\begin{thebibliography}{34}
\providecommand{\natexlab}[1]{#1}
\providecommand{\url}[1]{\texttt{#1}}
\expandafter\ifx\csname urlstyle\endcsname\relax
  \providecommand{\doi}[1]{doi: #1}\else
  \providecommand{\doi}{doi: \begingroup \urlstyle{rm}\Url}\fi

\bibitem[Beck and Teboulle(2009)]{Beck.Teboulle2008}
A.~Beck and M.~Teboulle.
\newblock A fast iterative shrinkage-threshold algorithm for linear inverse
  problems.
\newblock \emph{SIAM Journal on Imaging Sciences}, 2:\penalty0 183--202, 2009.
\newblock \doi{10.1137/080716542}.

\bibitem[Bredies et~al.(2010)Bredies, Kunisch, and
  Pock]{Bredies.Kunisch.ea2010}
K.~Bredies, K.~Kunisch, and T.~Pock.
\newblock Total generalized variation.
\newblock \emph{SIAM Journal on Imaging Sciences}, 3:\penalty0 492--526, 2010.
\newblock \doi{10.1137/090769521}.

\bibitem[Candes et~al.(2006)Candes, Demanet, Donoho, and Ying]{Candes2006}
E.~Candes, L.~Demanet, D.~Donoho, and L.~Ying.
\newblock Fast discrete curvelet transforms.
\newblock \emph{Multiscale Model. Simul.}, 5\penalty0 (3):\penalty0 861--899,
  Jan. 2006.
\newblock \doi{10.1137/05064182X}.

\bibitem[Chambolle(2005)]{Chambolle2005}
A.~Chambolle.
\newblock Total variation minimization and a class of binary {MRF} models.
\newblock In \emph{Energy Minimization Methods in Computer Vision and Pattern
  Recognition}, volume 3757 of \emph{Lecture Notes in Computer Science}, pages
  136--152. Springer-Verlag Berlin Heidelberg, 2005.

\bibitem[Chambolle and Lions(1997)]{Chambolle167--188}
A.~Chambolle and P.-L. Lions.
\newblock Image recovery via total variation minimization and related problems.
\newblock \emph{Numer. Math.}, 76:\penalty0 167--188, 1997.
\newblock \doi{10.1007/s002110050258}.

\bibitem[Chambolle and Pock(2011)]{Chambolle.Pock2010}
A.~Chambolle and T.~Pock.
\newblock A first-order primal-dual algorithm for convex problems with
  applications to imaging.
\newblock \emph{J. Math. Imaging Vis.}, 40:\penalty0 120--145, 2011.
\newblock \doi{10.1007/s10851-010-0251-1}.

\bibitem[Chan et~al.(2000)Chan, Marquina, and Mulet]{Chan2000}
T.~F. Chan, A.~Marquina, and P.~Mulet.
\newblock Higher order total variation-based image restoration.
\newblock \emph{SIAM J. Sci. Comput.}, 22:\penalty0 503--516, 2000.
\newblock \doi{10.1137/S1064827598344169}.

\bibitem[Chen et~al.(1998)Chen, Donoho, and Saunders]{Chen.Donoho.ea1998}
S.~S. Chen, D.~L. Donoho, and M.~A. Saunders.
\newblock Atomic decomposition by basis pursuit.
\newblock \emph{SIAM J. Sci. Comput.}, 20\penalty0 (1):\penalty0 33--61, 1998.
\newblock \doi{10.1137/S1064827596304010}.

\bibitem[Cohen et~al.(1992)Cohen, Daubechies, and Feauveau]{Cohen1992}
A.~Cohen, I.~Daubechies, and J.~Feauveau.
\newblock Biorthogonal bases of compactly supported wavelets.
\newblock \emph{Commun. Pure Appl. Math.}, 45:\penalty0 485--560, 1992.
\newblock \doi{10.1002/cpa.3160450502}.

\bibitem[Combettes and Pesquet(2011)]{Combettes.Pesquet2011}
P.~L. Combettes and J.-C. Pesquet.
\newblock Proximal splitting methods in signal processing.
\newblock In H.~H. Bauschke, R.~S. Burachik, P.~L. Combettes, V.~Elser, D.~R.
  Luke, and H.~Wolkowicz, editors, \emph{Fixed-Point Algorithms for Inverse
  Problems in Science and Engineering}, pages 185--212. Springer-Verlag, 2011.

\bibitem[Daubechies et~al.(2004)Daubechies, Defrise, and
  De~Mol]{Daubechies2004b}
I.~Daubechies, M.~Defrise, and C.~De~Mol.
\newblock An iterative thresholding algorithm for linear inverse problems with
  a sparsity constraint.
\newblock \emph{Communications On Pure And Applied Mathematics}, 57\penalty0
  (11):\penalty0 1413--1457, Nov. 2004.
\newblock \doi{10.1002/cpa.20042}.

\bibitem[Esser et~al.(2010)Esser, Zhang, and Chan]{Esser.Zhang.ea2010}
E.~Esser, X.~Zhang, and T.~F. Chan.
\newblock A general framework for a class of first order primal-dual algorithms
  for convex optimization in imaging science.
\newblock \emph{SIAM J. Imaging Sci.}, 3\penalty0 (4):\penalty0 1015--1046,
  Jan. 2010.
\newblock \doi{10.1137/09076934X}.

\bibitem[Gholami and Siahkoohi(2010)]{Gholami.Siahkoohi2010}
A.~Gholami and H.~R. Siahkoohi.
\newblock Regularization of linear and non-linear geophysical ill-posed
  problems with joint sparsity constraints.
\newblock \emph{Geophysical Journal International}, 180\penalty0 (2):\penalty0
  871--882, 2010.
\newblock \doi{10.1111/j.1365-246X.2009.04453.x}.

\bibitem[Goldstein and Osher(2009)]{Goldstein.Osher2009}
T.~Goldstein and S.~Osher.
\newblock The split {B}regman method for {L1}-regularized problems.
\newblock \emph{SIAM J. Imaging Sci.}, 2:\penalty0 323--343, 2009.
\newblock \doi{10.1137/080725891}.

\bibitem[Hennenfent et~al.(2008)Hennenfent, van~den Berg, Friedlander, and
  Herrmann]{Hennenfent.Berg.ea2008}
G.~Hennenfent, E.~van~den Berg, M.~P. Friedlander, and F.~J. Herrmann.
\newblock New insights into one-norm solvers from the {P}areto curve.
\newblock \emph{Geophysics}, 73\penalty0 (4):\penalty0 A23--A26, 2008.
\newblock \doi{10.1190/1.2944169}.

\bibitem[Herrmann and Hennenfent(2008)]{herrmann08crsi}
F.~J. Herrmann and G.~Hennenfent.
\newblock Non-parametric seismic data recovery with curvelet frames.
\newblock \emph{Geophysical Journal International}, 173\penalty0 (1):\penalty0
  233--248, 2008.
\newblock \doi{10.1111/j.1365-246X.2007.03698.x}.

\bibitem[Huber(1964)]{Huber1964}
P.~J. Huber.
\newblock Robust estimation of a location parameter.
\newblock \emph{Annals of Statistics}, 53:\penalty0 73--101, 1964.
\newblock \doi{10.1214/aoms/1177703732}.

\bibitem[Labate et~al.(2005)Labate, Lim, Kutyniok, and
  Weiss]{Labate.Lim.ea2005}
D.~Labate, W.~Lim, G.~Kutyniok, and G.~Weiss.
\newblock Sparse multidimensional representation using shearlets.
\newblock In \emph{Wavelets XI (San Diego, CA, 2005)}, pages 254--262. SPIE
  Proc. 5914, SPIE, Bellingham, WA, 2005.
\newblock \doi{10.1117/12.613494}.

\bibitem[Loris and Verhoeven(2011)]{Loris.Verhoeven2011}
I.~Loris and C.~Verhoeven.
\newblock On a generalization of the iterative soft-thresholding algorithm for
  the case of non-separable penalty.
\newblock \emph{Inverse Problems}, 27:\penalty0 125007, 2011.
\newblock \doi{10.1088/0266-5611/27/12/125007}.

\bibitem[Loris et~al.(2007)Loris, Nolet, Daubechies, and
  Dahlen]{Loris.Nolet.ea2007}
I.~Loris, G.~Nolet, I.~Daubechies, and F.~A. Dahlen.
\newblock Tomographic inversion using $\ell_1$-norm regularization of wavelet
  coefficients.
\newblock \emph{Geophysical Journal International}, 170\penalty0 (1):\penalty0
  359--370, 2007.
\newblock \doi{10.1111/j.1365-246X.2007.03409.x}.

\bibitem[Lukic et~al.(2011)Lukic, Lindblad, and Sladoje]{Lukic2011}
T.~Lukic, J.~Lindblad, and N.~Sladoje.
\newblock Regularized image denoising based on spectral gradient optimization.
\newblock \emph{Inverse Problems}, 27\penalty0 (8):\penalty0 085010, 2011.
\newblock \doi{10.1088/0266-5611/27/8/085010}.

\bibitem[Mallat(2009)]{Mallat2009}
S.~Mallat.
\newblock \emph{A Wavelet Tour of Signal Processing: The Sparse Way}.
\newblock Academic Press, third edition edition, 2009.

\bibitem[Nesterov(1983)]{Nesterov1983a}
Y.~E. Nesterov.
\newblock A method for solving a convex programming problem with convergence
  rate {$\mathcal{O}(1/k^2)$}.
\newblock \emph{Soviet Math. Dokl.}, 27\penalty0 (2):\penalty0 372--376, 1983.

\bibitem[Ronchi et~al.(1996)Ronchi, Iacono, and Paolucci]{Ronchi.Iacono.ea1996}
C.~Ronchi, R.~Iacono, and P.~Paolucci.
\newblock The ``cubed sphere'': A new method for the solution of partial
  differential equations in spherical geometry.
\newblock \emph{Journal of Computational Physics}, 124:\penalty0 93--114, 1996.
\newblock \doi{10.1006/jcph.1996.0047}.

\bibitem[Rudin et~al.(1992)Rudin, Osher, and Fatemi]{Rudin.Osher.ea1992}
L.~I. Rudin, S.~Osher, and E.~Fatemi.
\newblock Nonlinear total variation based noise removal algorithms.
\newblock \emph{Physica D: Nonlinear Phenomena}, 60\penalty0 (1-4):\penalty0
  259--268, Nov. 1992.
\newblock \doi{10.1016/0167-2789(92)90242-F}.

\bibitem[Setzer et~al.(2011)Setzer, Steidl, and Teuber]{Setzer2011}
S.~Setzer, G.~Steidl, and T.~Teuber.
\newblock Infimal convolution regularizations with discrete $\ell_1$-type
  functionals.
\newblock \emph{Communications in Mathematical Sciences}, 9\penalty0
  (3):\penalty0 797--827, 2011.

\bibitem[Simons et~al.(2011)Simons, Loris, Nolet, Daubechies, Voronin, Judd,
  Vetter, Charléty, and Vonesch]{Simons.Loris.ea2011}
F.~J. Simons, I.~Loris, G.~Nolet, I.~C. Daubechies, S.~Voronin, J.~S. Judd,
  P.~A. Vetter, J.~Charléty, and C.~Vonesch.
\newblock Solving or resolving global tomographic models with spherical
  wavelets, and the scale and sparsity of seismic heterogeneity.
\newblock \emph{Geophysical Journal International}, 187\penalty0 (2):\penalty0
  969--988, November 2011.
\newblock \doi{10.1111/j.1365-246X.2011.05190.x}.

\bibitem[Strong and Chan(2003)]{Strong2003}
D.~Strong and T.~Chan.
\newblock Edge-preserving and scale-dependent properties of total variation
  regularization.
\newblock \emph{Inverse Problems}, 19\penalty0 (6):\penalty0 S165, 2003.
\newblock \doi{10.1088/0266-5611/19/6/059}.

\bibitem[Trampert and Woodhouse(1995)]{Trampert+95}
J.~Trampert and J.~H. Woodhouse.
\newblock Global phase-velocity maps of {L}ove and {R}ayleigh-waves between 40
  and 150 seconds.
\newblock \emph{Geophysical Journal International}, 122\penalty0 (2):\penalty0
  675--690, 1995.
\newblock \doi{10.1111/j.1365-246X.1995.tb07019.x}.

\bibitem[Trampert and Woodhouse(1996)]{Trampert+96a}
J.~Trampert and J.~H. Woodhouse.
\newblock High resolution global phase velocity distributions.
\newblock \emph{Geophysical Research Letters}, 23\penalty0 (1):\penalty0
  21--24, 1996.
\newblock \doi{10.1029/95GL03391}.

\bibitem[Trampert and Woodhouse(2001)]{Trampert+2001}
J.~Trampert and J.~H. Woodhouse.
\newblock Assessment of global phase velocity models.
\newblock \emph{Geophysical Journal International}, 144\penalty0 (1):\penalty0
  165--174, 2001.
\newblock \doi{10.1046/j.1365-246x.2001.00307.x}.

\bibitem[van~den Berg and Friedlander(2011)]{Berg2011}
E.~van~den Berg and M.~P. Friedlander.
\newblock Sparse optimization with least-squares constraints.
\newblock \emph{SIAM J. Optim.}, 21\penalty0 (4):\penalty0 1201--1229, Oct.
  2011.
\newblock \doi{10.1137/100785028}.

\bibitem[Yamagishi and Yamada(2011)]{Yamagishi2011}
M.~Yamagishi and I.~Yamada.
\newblock Over-relaxation of the fast iterative shrinkage-thresholding
  algorithm with variable stepsize.
\newblock \emph{Inverse Problems}, 27\penalty0 (10):\penalty0 105008, 2011.
\newblock \doi{10.1088/0266-5611/27/10/105008}.

\bibitem[Zhang et~al.(2011)Zhang, Burger, and Osher]{Zhang.Burger.ea2011}
X.~Zhang, M.~Burger, and S.~Osher.
\newblock A unified primal-dual algorithm framework based on {B}regman
  iteration.
\newblock \emph{J. Sci. Comput.}, 46:\penalty0 20--46, 2011.
\newblock \doi{10.1007/s10915-010-9408-8}.

\end{thebibliography}
\bibliographystyle{abbrvnat}

\end{document}